\newtheorem{theorem}{Theorem}[section]
\newtheorem*{theorem*}{Theorem}
\newtheorem{corollary}[theorem]{Corollary}
\newtheorem{remark}[theorem]{Remark}
\newtheorem*{example*}{Example}
\newtheorem{proposition}[theorem]{Proposition}
\newtheorem{definition}[theorem]{Definition}
\newtheorem{lemma}[theorem]{Lemma}
\def\P{\mathcal{P}}
\def\od{\stackrel{\mathrm{def}}{=}}
\def\RR{\mathbb{R}}
\def\supp{\operatorname{supp}}
\def\det{\operatorname{det}}
\def\cm{\mathrm{cm}}
\def\conv#1{\operatorname{conv}\{#1\}}
\definecolor{gold}{rgb}{0.85,.66,0}
\definecolor{cherry}{rgb}{0.9,.1,.2}
\definecolor{burgundy}{rgb}{0.8,.2,.2}
\definecolor{orangered}{rgb}{0.85,.3,0}
\definecolor{orange}{rgb}{0.85,.4,0}
\definecolor{olive}{rgb}{.45,.4,0}
\definecolor{lime}{rgb}{.6,.9,0}
\definecolor{green}{rgb}{.2,.7,0}
\definecolor{darkgreen}{rgb}{.1,.5,0}
\definecolor{grey}{rgb}{.4,.4,.2}
\definecolor{brown}{rgb}{.4,.2,.1}
\definecolor{blue}{rgb}{0,.0, .81}
\definecolor{bluepurple}{rgb}{.3, .0, .7}
\def\carina#1{#1} 
\begin{document}

\begin{center}
\begin{large}
\noindent {\bf Pattern completion in symmetric threshold-linear networks}
\end{large}
\medskip

Carina Curto$^1$ and Katherine Morrison$^{1,2}$

\begin{small}
\noindent May 6, 2016
\end{small}

\end{center}

\begin{small}
\noindent $^1$ Department of Mathematics, The Pennsylvania State University, University Park, PA 16802

\noindent $^2$ School of Mathematical Sciences, University of Northern Colorado, Greeley, CO 80639 
\end{small}

\begin{abstract}
Threshold-linear networks are a common class of firing rate models that describe recurrent interactions among neurons.  Unlike their linear counterparts, these networks
generically possess multiple stable fixed points (steady states), making them viable candidates for memory encoding and retrieval.  In this work, 
we characterize stable fixed points of general threshold-linear networks with constant external drive, and discover constraints on the co-existence of fixed points involving different subsets of active neurons. In the case of symmetric networks, we prove the following {\it antichain} property:
if a set of neurons $\tau$ is the support of a stable fixed point, then no proper subset or superset of $\tau$ can support a stable fixed point.  Symmetric threshold-linear networks thus appear to be well suited for pattern completion, since the dynamics are guaranteed not to get ``stuck'' in a subset or superset of a stored pattern.  We also show that for any graph G, we can construct a network whose stable fixed points correspond precisely to the maximal cliques of G.  As an application, we design network decoders for place field codes, and demonstrate their efficacy for error correction and pattern completion.  The proofs of our main results build on the theory of permitted sets in threshold-linear networks, including recently-developed connections to classical distance geometry.
\end{abstract}

\begin{footnotesize}

\tableofcontents

\end{footnotesize}

\pagebreak

\section{Introduction}

In this work, we study stable fixed points\footnote{Equivalently: steady states, fixed point attractors, or stable equilibria.} of threshold-linear networks with constant external drive.  These networks model the activity of a population of neurons with recurrent interactions, whose dynamics are governed by the system of equations:
\begin{equation}\label{eq:network}
\dfrac{dx_i}{dt} = -x_i + \left[\sum_{j=1}^n W_{ij}x_j+\theta\right]_+, \quad i \in [n].
\end{equation}
Here $[n] = \{1,\ldots,n\}$ is the index set for $n$ neurons, $x_i(t)$ is the activity level (firing rate) of the $i^{\mathrm{th}}$ neuron, and  $\theta \in \RR$ is a constant external drive (the same for each neuron).  The real-valued $n \times n$ matrix $W$ governs the interactions between neurons, with $W_{ij}$ the effective connection strength from the $j^{\mathrm{th}}$ to the $i^{\mathrm{th}}$ neuron.  The threshold-linear function $[y]_+ = \max\{0,y\}$ ensures that $x_i(t) \geq 0$ for all $t > 0$, provided $x_i(0) \geq 0$.

The equations~\eqref{eq:network} differ from the linear system of ODEs, $\dot{x} = (-I+W)x + \theta$, only by the nonlinearity $[\;]_+$.  This nonlinearity is quite significant, however, as it allows the network to possess multiple stable fixed points, 
even though the analogous linear system can have at most one. 
Multistability is the key feature that makes threshold-linear networks viable models of memory encoding and retrieval \cite{Tsodyks, TrevesRolls, VogelsRajanAbbott, AppendixE}.  If, for example, $W$ is a symmetric matrix with nonpositive entries, then the system~\eqref{eq:network} is guaranteed to converge to a stable fixed point for any initial condition \cite{HahnSeungSlotine}.  The threshold-linear network thus functions as a traditional attractor neural network, in the spirit of the Hopfield model \cite{Hopfield1, Hopfield2}, with initial conditions playing the role of inputs and stable fixed points comprising outputs.

In this paper, we characterize stable fixed points of general threshold-linear networks, and discover constraints on the co-existence of fixed points involving different subsets of active neurons.  In the rest of this section, we give an overview of our main results, Theorems~\ref{Thm2} and~\ref{thm:symmetric-CTLN}, and explain their relevance to pattern completion.  We then illustrate their power in an application.  The remainder of the paper lays the foundation for proving the main results.
 Section~\ref{sec:prelim} summarizes relevant notation and background about permitted sets and their relationship to fixed points.  In Section~\ref{sec:gen-fixed-points} we provide general conditions that must be satisfied by fixed points of~\eqref{eq:network}.  Next, in Section~\ref{sec:geometry}, we prove a key technical result using classical distance geometry.  Finally, in Section~\ref{sec:main-proofs} we prove our main theorems by combining results from Sections~\ref{sec:gen-fixed-points} and~\ref{sec:geometry}.

\subsection{Fixed points of threshold-linear networks}
A vector $x^* \in \RR^n_{\geq 0}$ is a {\it fixed point} of~\eqref{eq:network} if, when evaluating at $x = x^*$, we obtain $dx_i/dt = 0$ for each $i$.
The {\it support} of a fixed point is given by 
$$\supp(x^*) \od \{i \in [n] \mid x^*_i > 0\}.$$ 
We use greek letters, such as $\sigma, \tau \subseteq [n]$, to denote supports.  A {\it principal submatrix} is obtained from a larger $n \times n$ matrix by restricting both row and column indices to some $\sigma \subset [n]$.  For example, $W_\sigma$ is the $|\sigma| \times |\sigma|$ principal submatrix of connection strengths among neurons in $\sigma$.  Similarly, $x_\sigma$ is the vector of firing rates for only the neurons in $\sigma$.

Although a threshold-linear network may have many stable fixed points, each fixed point is completely determined by its support.  To see why, observe that if $x^*$ is a fixed point with support $\sigma \subseteq [n]$, then
$$x_\sigma^* = [W_\sigma x^*_\sigma + \theta 1_\sigma]_+ > 0,$$
where $1_\sigma$ is a column vector of all ones.  This means we can drop the nonlinearity to obtain $(I-W_\sigma) x^*_\sigma = \theta 1_\sigma.$  If $x^*$ is a {\it stable} fixed point, then 
$I-W_\sigma$ is invertible \cite[Theorem 1.2]{flex-memory}, 
and hence we can solve for $x^*_\sigma$ explicitly as
$$x^*_\sigma = \theta(I-W_\sigma)^{-1} 1_\sigma.$$
Of course, this is only a fixed point of~\eqref{eq:network} if $x_\sigma^*>0$ and 
\carina{$x_k^* = [\sum_{i\in\sigma} W_{ki}x_i^*+\theta]_+ = 0$} for all $k \notin \sigma$.  If either of these conditions fail, then the fixed point with support $\sigma$ does not exist.  If it does exist, however, the above formula gives the precise values of the nonzero entries $x_\sigma^*$, and guarantees that it is unique.
Thus, in order to understand the stable fixed points of a network~\eqref{eq:network} it suffices to characterize the possible supports.  

\subsection{Summary of main results}
Given a choice of $W$ and $\theta$, what are the possible stable fixed point supports?  In this paper,
we provide a set of conditions that fully characterize these supports for general threshold-linear networks, and show how the conditions simplify when $W$ is inhibitory or symmetric  (see Section~\ref{sec:gen-fixed-points}).
The compatibility of the fixed point conditions across multiple $\sigma \subseteq [n]$ enables us to obtain results about which collections $\{\sigma_i\}$ of supports can (or cannot) co-exist in the same network.  Our strongest results in this vein arise when specializing to symmetric $W$.  In this case, we can build on the geometric theory of permitted sets, as introduced in \cite{net-encoding}, in order to greatly constrain the collection of allowed fixed point supports.  The following is our first main result.

\begin{theorem}[Antichain property]\label{Thm2}
Consider the threshold-linear network~\eqref{eq:network}, for a symmetric matrix~$W$ with zero diagonal. If there exists a stable fixed point with support $\tau \subseteq [n]$, then there is no stable fixed point with support $\sigma$ for any  $\sigma \subsetneq \tau$ or $\sigma \supsetneq \tau$.
\end{theorem}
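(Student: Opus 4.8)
The plan is to reduce the statement to a single geometric claim and dispatch both directions at once. By the results of Section~\ref{sec:gen-fixed-points}, a set $\sigma$ is the support of a stable fixed point precisely when (i) the principal submatrix $A_\sigma \succ 0$, where $A \od I-W$ (so $A$ is symmetric with unit diagonal, since $W$ has zero diagonal), (ii) the unique solution $x^*_\sigma = \theta A_\sigma^{-1} 1_\sigma$ is strictly positive, and (iii) the off-support inequalities $\sum_{i\in\sigma} W_{ki}x^*_i + \theta \le 0$ hold for all $k\notin\sigma$. Condition (i) is the permitted-set condition, and it is inherited by subsets. So it suffices to prove the equivalent reformulation: \emph{no two distinct nested supports $\sigma \subsetneq \tau$ can both be stable}. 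Indeed, if $\tau$ is stable then applying this to the pair $\sigma\subsetneq\tau$ rules out proper subsets, and applying it to $\tau\subsetneq\rho$ rules out proper supersets.

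The key step is to pass to the distance-geometry realization from~\cite{net-encoding}. Assume $\tau$ is stable, so $A_\tau\succ 0$; taking a Gram (Cholesky) factorization gives linearly independent vectors $\{b_i\}_{i\in\tau}$ in $\RR^{|\tau|}$ with $b_i\cdot b_j = A_{ij}$, and $|b_i|^2 = A_{ii} = 1$, i.e. \emph{unit} vectors. Working entirely inside this single Euclidean realization resolves the usual non-embeddability obstruction, since every index we must reason about lies in $\tau$. For a stable support $\rho\subseteq\tau$, set $u_\rho \od \tfrac{1}{\theta}\sum_{i\in\rho} x^{*,\rho}_i\, b_i$ (taking $\theta>0$, the relevant regime for nontrivial activity). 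A direct computation using (ii)--(iii) translates the fixed-point data into clean geometry: the support equations $A_\rho x^{*,\rho}_\rho = \theta 1_\rho$ give $b_i\cdot u_\rho = 1$ for all $i\in\rho$, the off-support inequalities give $b_j\cdot u_\rho \ge 1$ for all $j\notin\rho$ (in particular for $j\in\tau\setminus\rho$), and positivity (ii) makes $u_\rho$ a strictly positive combination of $\{b_i\}_{i\in\rho}$.

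Now suppose, for contradiction, that both $\sigma\subsetneq\tau$ are stable, and form $u_\tau$ and $u_\sigma$ inside $\tau$'s realization. Let $K \od \{w\in\RR^{|\tau|} : b_i\cdot w \ge 1 \text{ for all } i\in\tau\}$, a nonempty closed convex set. The point that minimizes $\tfrac12|w|^2$ over $K$ is exactly the metric projection of the origin onto $K$, and it is \emph{unique}; its characterization (the KKT/normal-cone condition for $\min\tfrac12|w|^2$ subject to $b_i\cdot w\ge 1$) is that the point lies in $K$ and equals a nonnegative combination $\sum_{i\,\text{active}}\lambda_i b_i$ of the constraint vectors it makes tight. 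Both $u_\tau$ and $u_\sigma$ satisfy this: $u_\tau\in K$ and is a positive combination of all $\{b_i\}_{i\in\tau}$, each active; and $u_\sigma\in K$ (its inner products against every $b_i$, $i\in\tau$, are $\ge 1$) and is a positive combination of $\{b_i\}_{i\in\sigma}$, each active. Hence $u_\tau = u_\sigma$. But $\{b_i\}_{i\in\tau}$ are linearly independent and $u_\tau$ has strictly positive coordinates on all of $\tau$, including on $\tau\setminus\sigma\neq\emptyset$, whereas $u_\sigma$ lies in the proper subspace $\mathrm{span}\{b_i : i\in\sigma\}$. This is a contradiction, completing the argument.

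I expect the main obstacle to be not this projection argument but the groundwork it rests on: establishing in Sections~\ref{sec:gen-fixed-points}--\ref{sec:geometry} the precise equivalence between stability and $A_\sigma\succ 0$, the exact inequality form of the off-support conditions (iii), and the guarantee that a permitted set admits the unit-vector Gram realization used above (this is where classical distance geometry, e.g. Cayley--Menger determinants, genuinely enters). A secondary technical point is handling the sign and degenerate cases: the clean statement above is for $\theta>0$ (so that the $\lambda_i = x^{*,\rho}_i/\theta$ are positive and $u_\rho$ sits in the cone), while boundary situations in which some off-support neuron satisfies $b_k\cdot u = 1$ with equality should be excluded by the strictness coming from stability/nondegeneracy, and the remaining sign regimes for $\theta$ handled directly from the explicit conditions of Section~\ref{sec:gen-fixed-points}.
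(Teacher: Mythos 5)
Your proposal is correct, but it takes a genuinely different route from the paper. The paper works with the auxiliary matrix $A = 11^T - I + W$, realizes each permitted $A_\tau$ as a nondegenerate \emph{square distance} matrix of a point configuration $\{q_i\}$, and proves the key incompatibility (Proposition~\ref{prop:antichain}) by locating the circumcenter $p_\sigma$ via the parallel axis theorem and then running a separating-hyperplane argument (Lemma~\ref{lemma:hierarchy}) to produce a specific $k \in \tau\setminus\sigma$ whose point $q_k$ lies outside the ball $B_{\rho_\sigma}(p_\sigma)$, violating the off-neuron condition. You instead realize $I-W_\tau \succ 0$ as a Gram matrix of unit vectors $b_i$, encode each stable support $\rho$ as the point $u_\rho = \theta^{-1}\sum_{i\in\rho} x^{*,\rho}_i b_i$, and observe that the on/off-neuron conditions say precisely that $u_\rho$ satisfies the KKT characterization of the unique minimum-norm point of the polyhedron $K = \{w : b_i\cdot w \geq 1,\ i\in\tau\}$; uniqueness forces $u_\tau = u_\sigma$, which contradicts linear independence of the $b_i$. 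I checked the translation of conditions (the support equations give $b_i\cdot u_\rho = 1$, the off-support inequalities give $b_k\cdot u_\rho\geq 1$, and $\theta>0$ is forced by $x^{*T}_\tau(I-W_\tau)x^*_\tau = \theta\, m(x^*)$ with both quadratic form and total activity positive), and the projection argument is sound. Your approach buys simplicity: it avoids Cayley--Menger determinants, circumcenters, and the parallel axis theorem, needs only the non-strict off-neuron inequality, and treats singletons uniformly (since $(I-W)_{\{i\}} = [1]$ is invertible, whereas the paper's $A_{\{i\}}=[0]$ forces a separate Lemma~\ref{lemma:sym-singleton}). What the paper's route buys is the sharper, reusable Proposition~\ref{prop:antichain} --- an explicit witness $k$ with a \emph{strict} violation of the off-neuron condition --- which is invoked again in the proof of Theorem~\ref{Thm3}; your uniqueness argument establishes the incompatibility without exhibiting such a witness. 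Both arguments ultimately rest on the same groundwork from Section~\ref{sec:gen-fixed-points} (stability of a symmetric $(-I+W)_\sigma$ is positive definiteness of $(I-W)_\sigma$, plus the exact form of the fixed-point conditions), which you correctly identify and defer to.
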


\noindent The proof is given in Section~\ref{sec:main-proofs}, and relies critically on a technical result, Proposition~\ref{prop:antichain}, which we state and prove in Section~\ref{sec:geometry} using ideas from classical distance geometry.

Theorem~\ref{Thm2} has several immediate consequences.  First, it implies that the set of possible fixed point supports is an antichain in the Boolean lattice $(2^{[n]}, \subseteq),$ where an {\it antichain} is defined as a set of incomparable elements in the poset (that is, no two elements are related by $\subseteq$).  Sperner's theorem \cite{sperner} states that the cardinality of a maximum antichain in the Boolean lattice is precisely ${n \choose \lfloor n/2 \rfloor}$ (see  Figure~\ref{fig:antichain}). We thus have the corollary:

\begin{corollary}\label{cor:sperner}
If $W$ is symmetric with zero diagonal, then the threshold-linear network~\eqref{eq:network} has at most ${n \choose \lfloor n/2 \rfloor}$ stable fixed points.
\end{corollary}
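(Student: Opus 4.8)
The plan is to reduce the counting of stable fixed points to a purely combinatorial bound on antichains, so that the only nontrivial input is Theorem~\ref{Thm2} together with the uniqueness of a fixed point per support. First I would observe that the map sending each stable fixed point $x^*$ to its support $\supp(x^*)$ is injective. Indeed, as established in the discussion preceding the corollary, if $x^*$ is a stable fixed point with support $\sigma$, then $I-W_\sigma$ is invertible and the nonzero entries are forced to equal $x^*_\sigma = \theta(I-W_\sigma)^{-1} 1_\sigma$; hence no two distinct stable fixed points can share the same support. Consequently, the number of stable fixed points equals the cardinality of the collection $\mathcal{S} \subseteq 2^{[n]}$ of those supports that actually carry a stable fixed point.

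Next I would invoke Theorem~\ref{Thm2}. Since $W$ is symmetric with zero diagonal, the antichain property guarantees that no two elements of $\mathcal{S}$ are related by proper inclusion: if $\tau \in \mathcal{S}$, then no $\sigma \subsetneq \tau$ and no $\sigma \supsetneq \tau$ lies in $\mathcal{S}$. Thus $\mathcal{S}$ is an antichain in the Boolean lattice $(2^{[n]}, \subseteq)$. Sperner's theorem \cite{sperner} then bounds $|\mathcal{S}| \le {n \choose \lfloor n/2 \rfloor}$, and combining this with the injectivity from the first step yields the claimed bound on the number of stable fixed points.

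The substantive content lives entirely in Theorem~\ref{Thm2}; granting that result, the corollary is a one-line consequence of Sperner's theorem. The only point requiring any care is the first step---verifying that distinct stable fixed points have distinct supports---which rests on the invertibility of $I-W_\sigma$ for stable supports, quoted from \cite{flex-memory}. Since that invertibility is exactly what validates the explicit formula for $x^*_\sigma$, there is no genuine obstacle here, and the proof is a direct assembly of the uniqueness statement, the antichain property, and Sperner's bound.
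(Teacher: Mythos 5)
Your proof is correct and follows exactly the paper's route: the paper likewise derives the corollary by noting that each stable fixed point is determined by its support (the explicit formula $x^*_\sigma = \theta(I-W_\sigma)^{-1}1_\sigma$ from Section 1.1), that Theorem~\ref{Thm2} makes the set of supports an antichain in $(2^{[n]},\subseteq)$, and that Sperner's theorem bounds any antichain by ${n \choose \lfloor n/2 \rfloor}$. Your explicit attention to the injectivity of the support map is a slightly more careful write-up of a step the paper treats as already established, but the argument is the same.
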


We note, however, that this upper bound is not necessarily tight.  We currently do not know whether or not there exist networks with ${n \choose \lfloor n/2 \rfloor}$ stable fixed points for each $n$. More generally, it is an open question to determine which antichains in the Boolean lattice can be realized as the set of fixed point supports for a symmetric $W$.

\begin{figure}[h!]
\begin{center}
\includegraphics[width=2in]{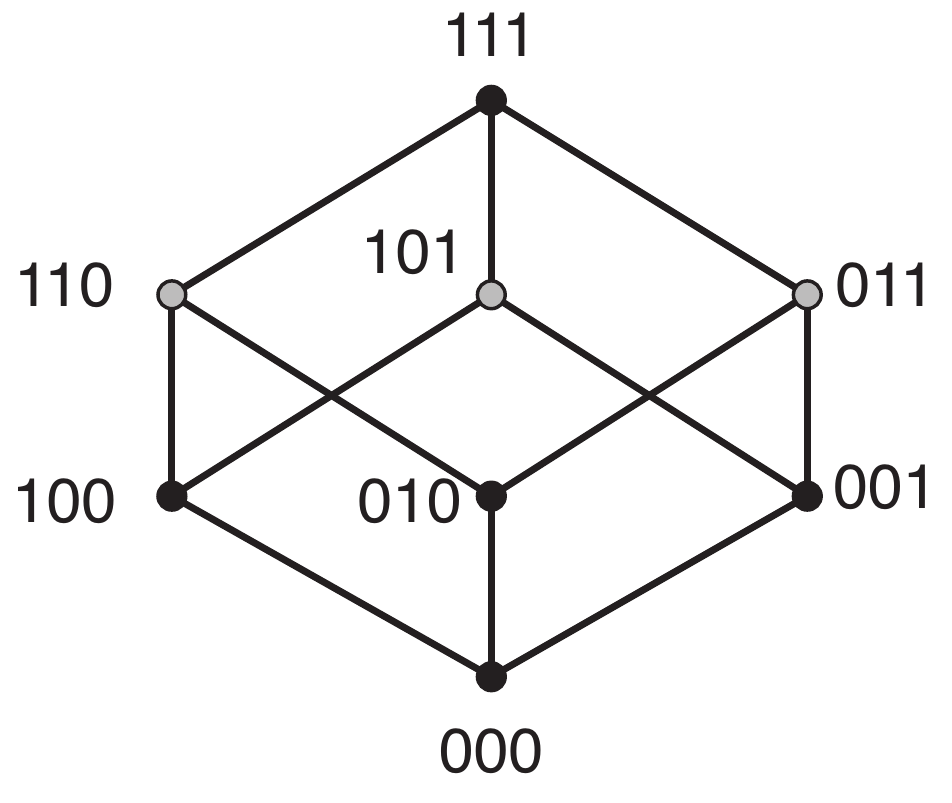}
\end{center}
\vspace{-.2in}
\caption{A Boolean lattice for $n=3$.  A maximal antichain, $\sigma_1 = \{1,2\}, \sigma_2 = \{1,3\},$ and $\sigma_3 = \{2,3\}$, is shown in gray.  This antichain has size ${3 \choose \lfloor 3/2 \rfloor} = 3$, in agreement with Sperner's theorem.}
\label{fig:antichain}
\end{figure}

A second consequence of Theorem~\ref{Thm2} is that symmetric threshold-linear networks appear well suited for pattern completion.  A network can perform pattern completion if, after initializing at a subset of a pattern, the network dynamics evolve the activity to the complete pattern.
In this context, a {\it pattern} of the network is a subset of neurons corresponding to the support of a stable fixed point.  Theorem~\ref{Thm2} implies that if $\tau \subseteq [n]$ is a pattern of a symmetric network, then the network activity is guaranteed not to ``get stuck'' in a subpattern $\sigma \subsetneq \tau$ or a superpattern $\sigma \supsetneq \tau$.

In order to further illustrate the power of Theorem~\ref{Thm2}, we now turn to the special case of symmetric 
threshold-linear networks with binary synapses.  Here the connection matrix $W=W(G, \varepsilon, \delta)$ is specified by a simple graph\footnote{Recall that a {\it simple} graph has undirected edges, no self-loops, and at most one edge between each pair of vertices.} $G$, whose vertices correspond to neurons:
{\begin{equation}\label{eq:binary-synapse}
W_{ij} = \left\{\begin{array}{cc} 0 & \text{ if } i = j, \\ -1 + \varepsilon & \text{ if } (ij) \in G,\\ -1 -\delta & \text{ if } (ij) \notin G. \end{array}\right.
\end{equation}}
\hspace{-0.05in}The parameters $\varepsilon, \delta \in \RR$ satisfy $0<\varepsilon<1$ and $\delta >0$, while $(ij) \in G$ indicates that there is an edge between vertices $i$ and $j$.  Note that since $\varepsilon < 1$, the network is inhibitory ($W_{ij}\leq 0$).  \carina{One might} interpret $W_{ij}$ as the effective connection strength between neurons $i$ and $j$ due to competitive inhibition that is attenuated by excitation whenever $(ij) \in G$ (see Figure~\ref{fig:network-cartoon}).  

\begin{figure}[h!]
\centering{
\includegraphics[width=4.5in]{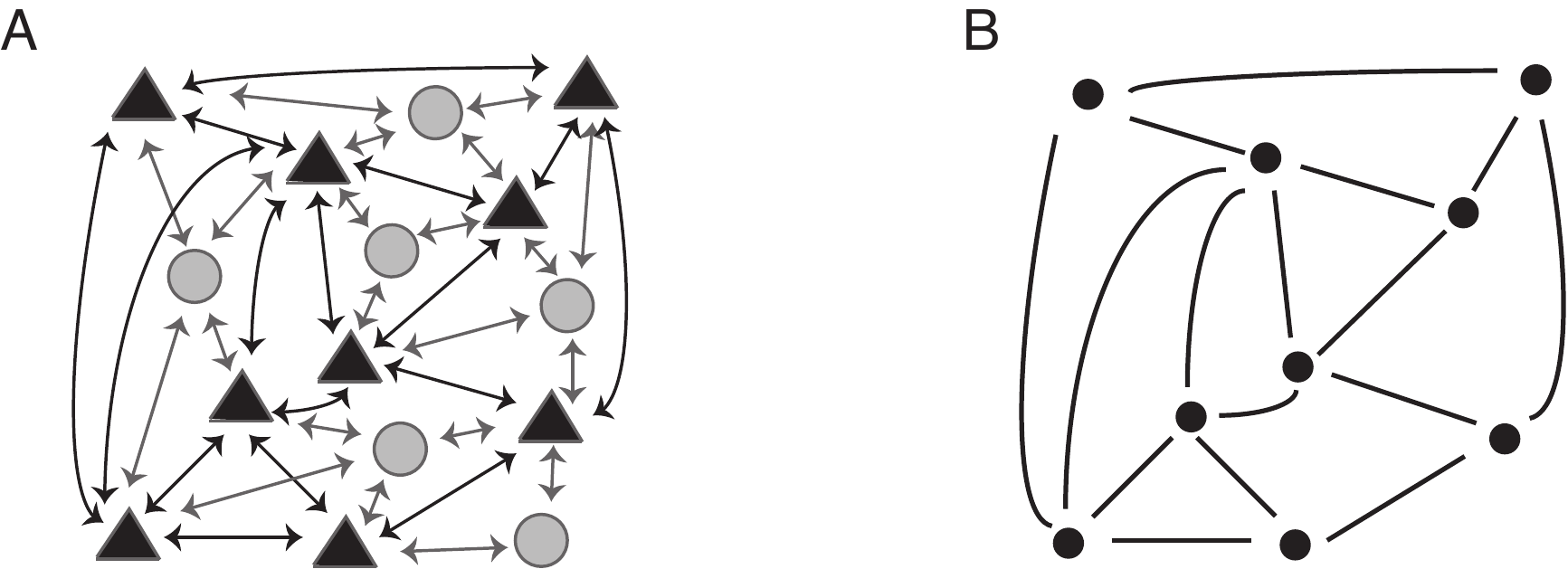}
}
\footnotesize{\caption{
Excitatory connections in a sea of inhibition.  ({\bf A}) Pyramidal neurons (black triangles) have bidirectional synapses onto each other.  Inhibitory interneurons (gray circles) produce non-specific inhibition to all neighboring pyramidal cells.  ({\bf B}) The graph of the network in (A), retaining only excitatory neurons and their connections.  A missing edge corresponds to strong competitive inhibition, $-1-\delta$, while the presence of an edge indicates attenuated inhibition, $-1+\varepsilon$, resulting from the sum of global background inhibition and an excitatory connection.
}
\label{fig:network-cartoon}
}
\end{figure}

In the case of  binary symmetric networks, combining our characterization of fixed point supports with Theorem~\ref{Thm2} we obtain our second main result.
To state it, we need some standard graph-theoretic terminology.  A subset of vertices $\sigma$ is a {\em clique} of $G$ if $(ij) \in G$ for all pairs $i,j \in \sigma$.  In other words, a clique is a subset of neurons that is all-to-all connected.  A clique $\sigma$ is called {\it maximal} if it is not contained in any larger clique of $G$.

\begin{theorem} \label{thm:symmetric-CTLN}
Let $G$ be a simple graph, and consider the network~\eqref{eq:network} with $W = W(G,\varepsilon, \delta)$ for any $0 < \varepsilon <1$, $\delta>0$ and $\theta>0$.  The stable fixed points of this network are in one-to-one correspondence with the maximal cliques of $G$.  Specifically, each maximal clique $\sigma$ is the support of a stable fixed point, given by
$$x_\sigma^* = \dfrac{\theta}{(1-\varepsilon)|\sigma|+\varepsilon}1_\sigma,$$
and there are no other stable fixed points.
\end{theorem}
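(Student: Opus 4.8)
The plan is to prove the two directions separately: first show directly that every maximal clique carries a stable fixed point with the stated value, and then show that no other support can. Throughout I would use the reduction recorded in the introduction, together with the stability characterization for symmetric networks supplied by the general analysis of Section~\ref{sec:gen-fixed-points}: a support $\sigma$ carries a (necessarily unique) fixed point precisely when $x_\sigma^* \od \theta(I-W_\sigma)^{-1}1_\sigma$ is well-defined and positive and the off-conditions $\sum_{i\in\sigma}W_{ki}x_i^*+\theta\le 0$ hold for every $k\notin\sigma$; and such a fixed point is stable exactly when $I-W_\sigma$ is positive definite and the off-conditions hold strictly.

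For the existence direction, I would fix a clique $\sigma$ with $|\sigma|=m$. Since every pair in $\sigma$ is an edge, $W_\sigma=(-1+\varepsilon)(J_m-I)$, where $J_m$ is the $m\times m$ all-ones matrix, and hence $I-W_\sigma=\varepsilon I+(1-\varepsilon)J_m$. As $J_m$ has eigenvalues $m$ (once) and $0$ (multiplicity $m-1$), the matrix $I-W_\sigma$ has eigenvalues $\varepsilon+(1-\varepsilon)m$ and $\varepsilon$, both strictly positive because $0<\varepsilon<1$; thus $I-W_\sigma$ is positive definite. Solving $(I-W_\sigma)x_\sigma^*=\theta 1_\sigma$ with the symmetric ansatz $x_\sigma^*=c\,1_\sigma$ yields $c[\varepsilon+(1-\varepsilon)m]=\theta$, which is exactly the claimed formula and is positive since $\theta>0$.

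It then remains to verify the strict off-conditions, and this is where maximality enters. For $k\notin\sigma$, let $a$ be the number of neighbors of $k$ inside $\sigma$; a short computation gives
\[
\sum_{i\in\sigma}W_{ki}x_i^*+\theta=\frac{\theta}{(1-\varepsilon)m+\varepsilon}\bigl[\varepsilon(a-m+1)-(m-a)\delta\bigr].
\]
Because $\sigma$ is a clique and $k\notin\sigma$, we have $a=m$ iff $\sigma\cup\{k\}$ is a clique, i.e.\ iff $\sigma$ fails to be maximal through $k$. When $\sigma$ is maximal every $k\notin\sigma$ has $a\le m-1$, so the bracket is at most $0-\delta<0$; hence the off-conditions hold strictly and $x_\sigma^*$ is a stable fixed point. (The same identity shows that for $a=m$ the bracket equals $\varepsilon>0$, so a nonmaximal clique does not even support a fixed point.)

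For the converse, suppose $\sigma$ supports a stable fixed point. By the symmetric stability criterion $I-W_\sigma$ is positive definite, hence so is every principal submatrix; if $\sigma$ contained a non-edge $\{i,j\}$, the associated $2\times 2$ principal submatrix of $I-W_\sigma$ would be $\left(\begin{smallmatrix}1 & 1+\delta\\ 1+\delta & 1\end{smallmatrix}\right)$, with determinant $1-(1+\delta)^2=-\delta(2+\delta)<0$, a contradiction. So $\sigma$ is a clique. To eliminate nonmaximal cliques I would invoke Theorem~\ref{Thm2}: every maximal clique $\tau$ is a stable support by the first part, and any nonmaximal clique $\sigma$ sits inside some maximal clique $\tau$, so $\sigma$ being a stable support would give $\sigma\subsetneq\tau$ with both supporting stable fixed points, violating the antichain property. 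This yields exactly the maximal cliques. The computations are routine; the one point needing care — the main obstacle — is correctly invoking the stability characterization (stability $\Longleftrightarrow$ positive definiteness of $I-W_\sigma$ plus strict off-conditions) and keeping the two exclusion mechanisms cleanly separated: positive definiteness forces cliqueness, while the off-condition/antichain argument forces maximality.
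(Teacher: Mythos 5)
Your proof is correct, and its skeleton matches the paper's (verify the permitted-set, on-neuron, and off-neuron conditions for each maximal clique; kill non-cliques with a $2\times 2$ principal submatrix; kill non-maximal cliques by maximality), but the execution differs in ways worth noting. The paper routes everything through the auxiliary matrix $A=11^T-I+W$ and the quantities $A_\sigma^{-1}1_\sigma$ and $a_\sigma$ of Theorem~\ref{Thm1} and Lemma~\ref{lemma:A_sigma}, which forces it to treat singleton cliques as a separate case (there $A_{\{i\}}=[0]$ is not invertible) via Proposition~\ref{prop:singleton}; you work directly with $I-W_\sigma=\varepsilon I+(1-\varepsilon)J_m$, which is invertible for every $m\ge 1$, so singletons need no special handling. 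Your off-neuron computation, parametrized by the number $a$ of neighbors of $k$ inside $\sigma$, is also sharper than the paper's bound: the bracket is at most $-\delta<0$ exactly when $a\le m-1$ and equals $\varepsilon>0$ when $a=m$, so you get for free that a non-maximal clique fails to support \emph{any} fixed point, whereas the paper excludes non-maximal cliques only by invoking the antichain Theorem~\ref{Thm2} (which you also offer, redundantly). One small caution: your opening claim that a fixed point is stable ``exactly when'' $I-W_\sigma$ is positive definite \emph{and} the off-conditions hold strictly slightly overstates the converse --- strictness is not established as necessary for stability in the degenerate boundary case, which is precisely why the paper introduces the ``fine-tuned'' hypothesis; but since you use only the forward implication for existence (where your inequalities are strict) and only positive definiteness for the converse, nothing in the argument depends on this. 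You also leave the empty support implicit, but $x^*=0$ fails the fixed-point equation outright when $\theta>0$, so that omission is harmless.
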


\noindent The proof of Theorem~\ref{thm:symmetric-CTLN} is given in Section~\ref{sec:main-proofs}, and demonstrates that it is possible to have a network in which all stable fixed point supports correspond to maximal stored patterns.  This situation is ideal for pattern completion.  We then show that this feature generalizes to a much broader class of symmetric networks (see Theorem~\ref{Thm3} in Section~\ref{sec:main-proofs}).

\carina{Theorem~\ref{thm:symmetric-CTLN} provides additional insight into the question of how many stable fixed points can be stored in a symmetric threshold-linear network of $n$ neurons.  Corollary~\ref{cor:sperner} gave us an upper bound of ${n \choose \lfloor n/2 \rfloor},$ as a consequence of Sperner's theorem and Theorem~\ref{Thm2}.  As a corollary of Theorem~\ref{thm:symmetric-CTLN}, we now obtain a lower bound.}

\begin{corollary} \carina{For each $n$, there exists a symmetric threshold-linear network of the form~\eqref{eq:network} with $2^{\lfloor n/2 \rfloor}$ stable fixed points.}
\end{corollary}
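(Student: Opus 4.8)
The plan is to reduce the claim to a purely combinatorial statement using Theorem~\ref{thm:symmetric-CTLN}. That theorem guarantees that for any simple graph $G$ on $n$ vertices and any admissible parameters $0<\varepsilon<1$, $\delta>0$, $\theta>0$, the network~\eqref{eq:network} with $W=W(G,\varepsilon,\delta)$ has its stable fixed points in one-to-one correspondence with the maximal cliques of $G$. Consequently, it suffices to exhibit, for each $n$, a simple graph on $n$ vertices having exactly $2^{\lfloor n/2 \rfloor}$ maximal cliques; feeding this graph into the theorem then produces a symmetric network with precisely $2^{\lfloor n/2 \rfloor}$ stable fixed points.

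For the construction, I would take $G$ to be the complement of a (near-)perfect matching. Set $k = \lfloor n/2 \rfloor$, and let $M$ be the graph on $[n]$ consisting of the $k$ disjoint edges $(2i-1,2i)$ for $1 \le i \le k$, leaving vertex $n$ unmatched when $n$ is odd. Define $G = \overline{M}$. Equivalently, $G$ is the complete multipartite graph $K_{2,\ldots,2}$ on the $k$ pairs $\{2i-1,2i\}$, together with one extra vertex joined to all others in the odd case.

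The key step is the clique count. Since $\sigma$ is a clique of $G = \overline{M}$ exactly when $\sigma$ is an independent set of $M$, the maximal cliques of $G$ correspond bijectively to the maximal independent sets of the matching $M$. A maximal independent set of $M$ must contain exactly one endpoint from each of the $k$ matched edges (and must contain the unmatched vertex $n$, if present, since it is isolated in $M$). This yields precisely $2^{k} = 2^{\lfloor n/2 \rfloor}$ maximal independent sets, hence $2^{\lfloor n/2 \rfloor}$ maximal cliques of $G$. Applying Theorem~\ref{thm:symmetric-CTLN} to this $G$ with any fixed $\varepsilon,\delta,\theta$ in the allowed ranges completes the argument.

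I do not expect a genuine obstacle here; the reduction does all the work, and the only subtlety is to confirm the count is \emph{exactly} $2^{\lfloor n/2 \rfloor}$ and to handle the parity of $n$. The delicate point in the odd case is ensuring the leftover vertex does not spawn a spurious singleton maximal clique: because it is isolated in $M$ it is universal in $G = \overline{M}$, so every maximal clique simply absorbs it, preserving the count rather than inflating it.
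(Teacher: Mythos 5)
Your proposal is correct and follows essentially the same route as the paper: the paper also invokes Theorem~\ref{thm:symmetric-CTLN} and takes $G$ to be the complete $\lfloor n/2\rfloor$-partite graph with parts of size $2$ (i.e., the complement of a near-perfect matching), with the leftover vertex joined to all others when $n$ is odd, and counts the $2^{\lfloor n/2\rfloor}$ maximal cliques by choosing one vertex per part. Your extra care with the universal vertex in the odd case and the phrasing via maximal independent sets of the matching are fine but not a substantively different argument.
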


\carina{To see how this follows from  Theorem~\ref{thm:symmetric-CTLN}, let $m = \lfloor n/2 \rfloor$ and consider the complete $m$-partite graph $G$, where each part has $2$ vertices.  (Note that two vertices of $G$ have an edge between them if and only if they belong to distinct parts.)  If $n$ is odd, let the last vertex connect to all the others.  It is easy to see that $G$ has precisely $2^m$ maximal cliques, obtained by selecting one vertex from each part.  By Theorem~\ref{thm:symmetric-CTLN}, the corresponding network with $W = W(G,\varepsilon,\delta)$ has $2^m$ stable fixed points.}

We end this section with a specific application of Theorem~\ref{thm:symmetric-CTLN}, illustrating the capability of binary symmetric networks to correct noise-induced errors in combinatorial neural codes. 

\subsection{Application: pattern completion and error correction for place field codes}\label{sec:application}

The hippocampus contains a special class of neurons, called place cells, that act as position sensors for tracking the animal's location in space \cite{OKeefe}.  Place fields are the spatial receptive fields associated to place cells, and {\it place field codes} are the corresponding combinatorial neural codes defined from intersections of place fields \cite{neuro-coding}.  In this application of Theorem~\ref{thm:symmetric-CTLN} we examine the performance of a binary symmetric network that is designed to perform pattern completion and error correction for place field codes.  
\carina{Our aim is to illustrate how a simple threshold-linear network could function as an effective decoder for a biologically realistic neural code.  We do not wish to suggest, however, that the specific networks considered in Theorem~\ref{thm:symmetric-CTLN} are in any way accurate models of the hippocampus.}

\begin{figure}[!h]
\begin{center}
\includegraphics[width=5.75in]{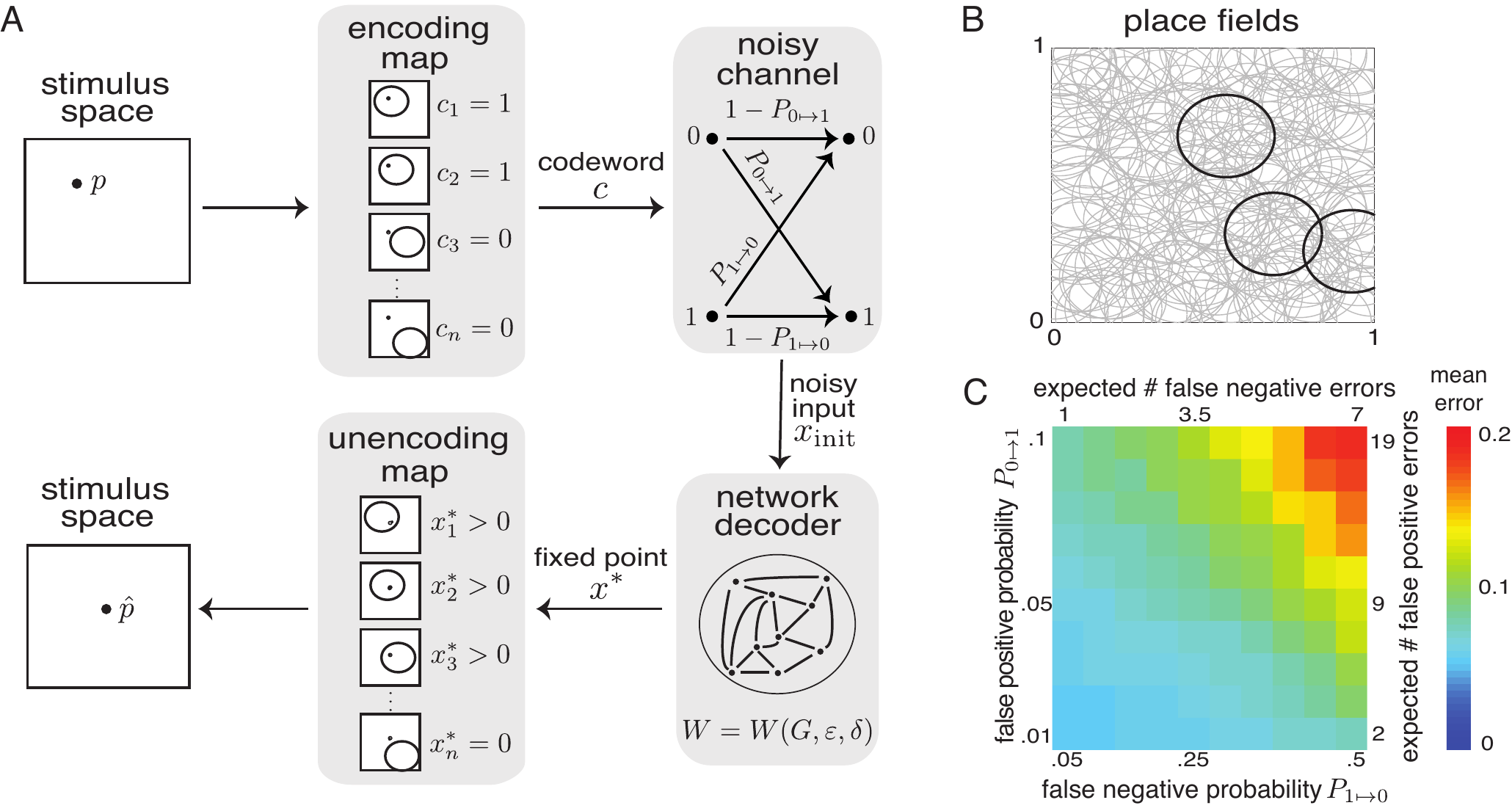}
\end{center}
\vspace{-.3in}
\caption{The binary symmetric network decoder has low mean distance error.  \textbf{(A)} The sequence of steps in one trial of the binary symmetric network decoder.  A point $p$ is chosen uniformly at random from a two-dimensional stimulus space, and is passed through an encoding map defined by the set of place fields, resulting in a codeword $c$.  The codeword then goes through a noisy channel, and the corrupted word $x_{\mathrm{init}}$ becomes the input (initial condition) for the binary symmetric network decoder.  The network evolves to a fixed point $x^*$, which is then passed to the unencoding map.  The final output is an estimated point in the stimulus space, $\hat{p}$.  \textbf{(B)} 200 place fields covering a 1x1 stimulus space, yielding a place field code of length 200 with an average of 14 neurons (7\%) firing per codeword.   Three place fields are highlighted in black for clarity.  \textbf{(C)} Performance of the code from (B) across a range of noisy channel conditions 
$(P_{1\mapsto 0}, P_{0\mapsto 1})$.  For each condition, 1,000 trials were performed following the process described in (A).  The distance error was calculated as $\Vert p - \hat{p} \Vert$.  Colors denote mean distance errors across 1,000 trials.}
\label{fig:PF-decoder}
\end{figure}

We generate place field codes from a set of circular place fields, $U_1,\ldots, U_n \subset [0,1]^2$, in a square box environment (see Figure~\ref{fig:PF-decoder}B).  Each position $p \in [0,1]^2$ corresponds to a binary codeword, $c = c_1 c_2 \cdots c_n$, where
$$c_i=\left\{\begin{array}{ll} 1& \textrm{if } p \in U_i\\ 0& \textrm{if } p \notin U_i \end{array}\right..$$
The binary symmetric network corresponding to such a code has $n$ neurons, one for each place \carina{field}, and assigns the larger weight, $-1+\varepsilon$, to \carina{connections} between neurons whose place fields overlap:
\begin{eqnarray}\label{eq:W-placefield}
W_{ij} &=& \left\{\begin{array}{cc} 0 & \text{ if } i = j, \\ -1 + \varepsilon & \text{ if } U_i \cap U_j \neq \emptyset,\\ -1 -\delta & \text{ if } U_i \cap U_j = \emptyset. \end{array}\right.
\end{eqnarray}
This is precisely the matrix $W(G,\varepsilon,\delta)$, defined in~\eqref{eq:binary-synapse}, where $G$ is the co-firing graph with edges 
     $(ij) \in G$ if there exists a codeword $c$ such that $c_i=c_j=1$.  It is easy to see that the network $W$ can be learned from a simple rule where each \carina{connection} $W_{ij}$ is initially set to $-1-\delta$, and then potentiated to $-1+\varepsilon$ after presentation of a codeword in which the pair of neurons $i$ and $j$ co-fire.

The above network can be used to correct errors induced by transmitting codewords of the place field code through a noisy channel. 
Figure~\ref{fig:PF-decoder}A  shows the basic paradigm of error correction by a network decoder.  A point $p$ in the stimulus space (the animal's square box environment) is encoded as a binary codeword via an encoding map given by the place fields $\{U_1, \ldots, U_n\}$.  This codeword is then passed through a noisy channel, and is received by the network decoder as a noisy initial condition.  The network then evolves according to~\eqref{eq:network}, until it reaches  a stable fixed point $x^*$ (see Appendix~\ref{sec:gen-PF-codes} for further details).
   From $x^*$ we obtain an estimated point $\hat{p}$ in the stimulus space, given by the mean of the centers of all place fields $U_i$ such that $x_i^*>0$.   The \emph{distance error} of the network decoder on a single trial is the Euclidean distance $\Vert p-\hat{p} \Vert$.

To test the performance of the binary symmetric network decoder, we randomly generated place field codes with 200 neurons, and an average of $\approx 7$\% of neurons firing per codeword (see Appendix~\ref{sec:gen-PF-codes} for further details). Figure~\ref{fig:PF-decoder}B illustrates the coverage of the $1\times 1$ square box environment by 200 place fields.  For each code, we computed $W$ according to~\eqref{eq:W-placefield}, with  $\varepsilon = 0.25$ and  $\delta = 0.5,$ to obtain the corresponding network decoder.  We then performed 1,000 trials for each of 100 different noisy channel conditions.  Each noise condition consisted of a \emph{false positive probability} $P_{0\mapsto 1}$, the probability that a 0 is flipped to a 1 in a transmitted codeword, and a \emph{false negative probability} $P_{1 \mapsto 0}$, the probability of a $1 \mapsto 0$ flip (see Figure~\ref{fig:PF-decoder}A, top right).  
Figure~\ref{fig:PF-decoder}C shows the mean distance errors of the network decoder, for a range of noise conditions $(P_{1 \mapsto 0}, P_{0\mapsto 1})$.
Note that because the expected number of 1s in each codeword is $\approx 14$ (out of 200 bits), a value of $P_{0\mapsto 1} = 0.1$ yields nearly 19 expected false positive errors per codeword, while $P_{1 \mapsto 0} = 0.5$ results in only 7 expected false negative errors per codeword.  Most of the considered noise conditions yielded mean distance errors of $0.1$ or less.  Even for very severe noise conditions with large numbers of expected errors, the mean distance error did not exceed $0.2$, or about 20\% of the side length of the environment.
These results are representative of all place field codes tested.


\section{Preliminaries}\label{sec:prelim}

Here we introduce some notation and review some basic facts about permitted sets and fixed points, including the connection to distance geometry in the symmetric case.

\subsection{Notation}

Using vector notation, $x = (x_1,\ldots,x_n)^T$, we can rewrite equation~\eqref{eq:network} more compactly as
$$\dot{x} = -x + [Wx + \theta]_+,$$
where $\dot{x}$ denotes the time derivative $dx/dt$, and the nonlinearity $[ \; ]_+$ is applied entry-wise.  Solutions to~\eqref{eq:network} have the property that $x(t) \geq 0$ for all $t>0$, provided that $x(0) \geq 0$.
We use the notation $x \geq 0$ to indicate that every entry of the vector $x$ is nonnegative.  The inequalities $>, <,$ and $\leq$ are similarly applied entry-wise to vectors. 
If $\sigma = \supp(x)$, then the restriction of $x$ to its support satisfies $x_\sigma > 0$, while $x_k = 0$ for all $k \notin \sigma$.  

Many of our results are most conveniently stated in terms of the auxiliary matrix 
\begin{equation}\label{eq:A}
A = 11^T - I + W,
\end{equation}
where $11^T$ is the rank one matrix with entries all equal to 1.  In other words,
\begin{equation*}
W_{ij} = \left\{\begin{array}{cc} A_{ij} & \text{for} \; i = j,\\ -1+A_{ij} & \text{for} \; i \neq j.\end{array}\right.
\end{equation*}
Recall that for any $\sigma \subseteq [n]$, the {\it principal submatrix} of $A$ obtained by restricting both rows and columns to the index set $\sigma$ is denoted $A_\sigma$.  
When $A_\sigma$ is invertible, we define the number
\begin{equation}\label{eq:a_sigma}
a_\sigma \od \sum_{i,j \in \sigma} (A_\sigma^{-1})_{ij} = - \dfrac{\cm(A_\sigma)}{\det(A_\sigma)},
\end{equation}
where the notation $A_\sigma^{-1} = (A_\sigma)^{-1} \neq (A^{-1})_\sigma$.  
The second equality is a simple consequence of Cramer's Rule, and connects $a_\sigma$ to the Cayley-Menger determinant,
$$\cm(A) \od \left(\begin{array}{cc} 0 & 1^T \\ 1 & A\end{array}\right),$$
defined for any $n \times n$ matrix $A$.   Here $1$ denotes the $n \times 1$ column vector of all ones, and $1^T$ is the corresponding row vector.  Note that $a_\sigma$ has nice scaling properties: if $A_\sigma \mapsto tA_\sigma$, then $a_\sigma \mapsto \dfrac{1}{t} a_\sigma.$  

The Cayley-Menger determinant is well-known for its geometric meaning, which we will review below.  It also appears in the determinant formula (see \cite[Lemma 7]{net-encoding}),
\begin{equation}\label{eq:det-formula}
\det(-11^T+A) = \det(A) + \cm(A),
\end{equation}
which holds for any $n \times n$ matrix $A$.  We will make use of this formula in the proof of Theorem~\ref{thm:fixed-pts}.

\subsection{Permitted sets and fixed points}\label{sec:prelim-permitted}
Instead of restricting to a single external drive $\theta$ for all neurons,  prior work has considered a generalization of the network~\eqref{eq:network} of the form
\begin{equation}\label{eq:permitted}
\dot{x} = -x + [Wx + b]_+,
\end{equation}
where $b \in \RR^n$ is an arbitrary vector of external drives $b_i$ for each neuron.
In this context, Hahnloser, Seung, and collaborators \cite{HahnSeungSlotine, XieHahnSeung} developed the idea of \emph{permitted sets} of the network, which are the supports of stable fixed points that can arise for some $b$.  The set of all permitted sets of a network of the form~\eqref{eq:permitted} thus depends only on $W$, and is denoted $\P(W)$.   We know from prior work \cite{HahnSeungSlotine, flex-memory}
that $\sigma$ is a permitted set if and only if $(-I+W)_\sigma$ is a stable matrix.\footnote{A matrix is {\it stable} if all of its eigenvalues have strictly negative real parts.}  In other words,
\begin{eqnarray}\label{eq:P(W)}
\P(W) &=& \{\sigma \subseteq [n] \mid -11^T+A_\sigma \text{ is a stable matrix}\},
\end{eqnarray}
since $(-I+W)_\sigma = -11^T+A_\sigma$.

In the special case considered here, where $b_i = \theta$ for all $i \in [n]$, 
it is clear that any stable fixed point $x^*$ of~\eqref{eq:network} must have $\supp(x^*) \in \P(W)$.
The converse, however, is not true.
Although all permitted sets have corresponding fixed points for some $b \in \RR^n$, there is no guarantee that such a fixed point exists for $b = \theta$.  For example, in the case of symmetric $W$ it has been shown that $\sigma \in \P(W)$ implies $\tau \in \P(W)$ for all subsets $\tau \subset \sigma.$  
This property appears to imply that such networks cannot perform pattern completion, since subsets of permitted sets are also permitted.  However, Theorem~\ref{Thm2} tells us that if a permitted set is the support of a stable fixed point of~\eqref{eq:network}, none of its subsets can be a fixed point support in the case $b = \theta$, despite being permitted.

\subsection{Permitted sets for symmetric $W$} \label{sec:geometry-background}
In order to prove Theorem~\ref{Thm2}, we will make heavy use of the geometric theory of permitted sets that was developed in \cite{net-encoding}.  Here we review some basic facts from classical distance geometry that allow us to geometrically characterize the permitted sets of a network when $W$ is symmetric.

An $n\times n$ matrix $A$ is a \emph{nondegenerate square distance matrix} if there exists a configuration of points $\{p_i\}_{i \in [n]}$ in the Euclidean space $\RR^{n-1}$ such that $A_{ij} = ||p_i-p_j||^2$, and the convex hull of the $p_i$s forms a full-dimensional simplex.  The Cayley-Menger determinant $\cm(A)$ computes the volume of this simplex, and
can be used to detect whether or not a given matrix is nondegenerate square distance.   In particular, 
if $A_\sigma$ is a nondegenerate square distance matrix with $|\sigma|>1$, then $A_\sigma$ is invertible and $a_\sigma>0$ \cite[Corollary 8]{net-encoding}. See \cite[Appendix A]{net-encoding} for a more complete review of these and other related facts about nondegenerate square distance matrices.

In the singleton case, $\sigma = \{i\}$ for some $i \in [n]$, the matrix $A_\sigma = [0]$ is always a nondegenerate square distance matrix with $\cm(A_\sigma) = -1 \neq 0$, although $\det(A_\sigma) = 0$.  Because of this, it is convenient to declare $a_\sigma = a_{\{i\}} = \infty$ whenever $A_{ii} = 0$.  With this convention, we can state the following geometric characterization of permitted sets for symmetric networks, first given in \cite{net-encoding}. 

\begin{lemma}[Proposition 1, \cite{net-encoding}] \label{lemma:geometric-char}
Let $W$ be a symmetric network with zero diagonal, let $A = 11^T - I + W$, as in~\eqref{eq:A}, and suppose $\sigma \subseteq [n]$ is nonempty.  Then $\sigma \in \P(W)$ if and only if $A_\sigma$ is a nondegenerate square distance matrix with $a_\sigma>1.$  Equivalently, $-11^T + A_\sigma$ is a stable matrix if and only if $A_\sigma$ is a nondegenerate square distance matrix with $a_\sigma>1.$
\end{lemma}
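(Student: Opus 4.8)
The plan is to reduce the whole statement to a single definiteness criterion and then use the determinant identity~\eqref{eq:det-formula} to convert a sign condition on a determinant into the inequality $a_\sigma>1$. Since $W$ is symmetric with zero diagonal, the matrix $(-I+W)_\sigma = -11^T+A_\sigma$ is symmetric, so its eigenvalues are real; it is therefore a stable matrix (the condition defining $\sigma\in\P(W)$ in~\eqref{eq:P(W)}) if and only if it is negative definite. Writing $m=|\sigma|$ and $B=A_\sigma$, the lemma becomes: $-11^T+B\prec 0$ if and only if $B$ is a nondegenerate square distance matrix with $a_\sigma>1$. The singleton case $\sigma=\{i\}$ is immediate, since then $B=[0]$, $-11^T+B=[-1]\prec 0$, and by the stated convention $B$ is nondegenerate square distance with $a_\sigma=\infty>1$, so both sides hold; I would treat $m\ge 2$ below. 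The crucial algebraic bridge comes from combining~\eqref{eq:det-formula} with the definition~\eqref{eq:a_sigma}: whenever $B$ is invertible, $\cm(B)=-a_\sigma\det(B)$, and hence
$$\det(-11^T+B)=\det(B)+\cm(B)=(1-a_\sigma)\det(B).$$

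For the forward direction I would assume $-11^T+B\prec 0$. Restricting the associated quadratic form to the hyperplane $1^\perp=\{x:1^Tx=0\}$ annihilates the rank-one term, giving $x^TBx<0$ for every nonzero $x\in 1^\perp$. By the Schoenberg-type characterization of squared distance matrices reviewed in \cite[Appendix A]{net-encoding} (a zero-diagonal symmetric matrix is nondegenerate square distance precisely when it is negative definite on $1^\perp$), this shows $B$ is a nondegenerate square distance matrix, and in particular $B$ is invertible. Moreover, negative definiteness of $B$ on the $(m-1)$-dimensional space $1^\perp$ forces, by Cauchy interlacing together with $\operatorname{tr}(B)=0$, the inertia $(1,m-1,0)$, so $\det(B)$ has sign $(-1)^{m-1}$. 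Since $-11^T+B\prec 0$ makes $\det(-11^T+B)$ have sign $(-1)^m$, the bridge identity forces $1-a_\sigma<0$, i.e. $a_\sigma>1$.

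For the converse I would assume $B$ is nondegenerate square distance with $a_\sigma>1$. As above, $B$ has inertia $(1,m-1,0)$ and $\det(B)$ has sign $(-1)^{m-1}$. The matrix $-11^T+B$ is obtained from $B$ by subtracting a rank-one positive semidefinite matrix, so Weyl's interlacing gives $\lambda_{k+1}(B)\le \lambda_k(-11^T+B)\le \lambda_k(B)$ for the decreasingly ordered eigenvalues; since $B$ has only one positive eigenvalue, the upper bounds show that every eigenvalue of $-11^T+B$ except possibly the largest is already negative. The sign of that last eigenvalue is then read off from the determinant: by the bridge identity, $\det(-11^T+B)=(1-a_\sigma)\det(B)$ has sign $(-1)\cdot(-1)^{m-1}=(-1)^m$, which is incompatible with a nonnegative largest eigenvalue. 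Hence $-11^T+B\prec 0$, completing the equivalence.

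I expect the only genuinely delicate step to be the translation between the geometric hypothesis and linear algebra, namely invoking the classical fact that ``nondegenerate square distance matrix'' is equivalent to negative definiteness on $1^\perp$, and extracting from it the exact inertia $(1,m-1,0)$ and the sign of $\det(B)$. Everything else is bookkeeping: the rank-one interlacing fixes all eigenvalues of $-11^T+B$ except one, and the determinant identity~\eqref{eq:det-formula} — which is precisely why $a_\sigma$ enters — pins down the sign of that last eigenvalue and converts it into the clean inequality $a_\sigma>1$.
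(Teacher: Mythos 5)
The paper does not prove this lemma at all: it is imported verbatim as Proposition 1 of \cite{net-encoding}, so there is no internal proof to compare against. Your argument is a correct, self-contained reconstruction: reducing stability of the symmetric matrix $-11^T+A_\sigma$ to negative definiteness, passing to $1^\perp$ to kill the rank-one term, invoking the Schoenberg characterization (nondegenerate square distance $\Leftrightarrow$ negative definite on $1^\perp$ with zero diagonal) to get the inertia $(1,|\sigma|-1,0)$ of $A_\sigma$, and then using $\det(-11^T+A_\sigma)=(1-a_\sigma)\det(A_\sigma)$ from~\eqref{eq:det-formula} together with rank-one Weyl interlacing to pin down the sign of the one undetermined eigenvalue. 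All steps check out, including the singleton convention and the trace-zero argument that forces exactly one positive eigenvalue; this is essentially the argument given in the cited reference, so nothing further is needed.
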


\section{Fixed point conditions}\label{sec:gen-fixed-points}

In this section, we derive general fixed point conditions for networks of the form~\eqref{eq:network}, and demonstrate simplifications of these conditions for inhibitory networks and symmetric networks.  

\subsection{General fixed point conditions}

Consider a fixed point $x^*$ with $\supp(x^*) = \sigma$.   
The conditions for $x^*$ to be a fixed point of~\eqref{eq:network} are given by the equation:
$$x^* = [Wx^*+\theta]_+,$$
together with the requirement that $x_i^*>0$ for all $i \in \sigma$, and $x_k^*=0$ for all $k \notin \sigma$.
 For a single neuron $i \in [n]$, the fixed point equation becomes:
\begin{eqnarray*}
x_i^* &=& \left[\sum_{j\in\sigma} W_{ij} x_j^* + \theta \right]_+ = 
\left[x_i^* - \sum_{j\in\sigma} (11^T)_{ij}x_j^* + \sum_{j\in\sigma} A_{ij}x_j^* + \theta \right]_+\\
&=&  \left[x_i^* - m(x^*) + (A x^*)_i + \theta \right]_+,
\end{eqnarray*}
where $A$ is given by~\eqref{eq:A} and 
$$m(x^*) \od \sum_{i=1}^n x_i^* = \sum_{i\in\sigma} x_i^*$$ 
is the total population activity.
Separating out the ``on'' neurons $i \in \sigma$ from the ``off'' neurons $k \notin \sigma$, we obtain the following lemma.

\begin{lemma}\label{lemma:fixed-pts}
Consider a vector $x^*$, with $ \supp(x^*) = \sigma.$ 
Then $x^*$ is a fixed point of~\eqref{eq:network} if and only if
\begin{eqnarray*}
(Ax^*)_i  &=& m(x^*) - \theta, \quad \text{for all} \;\; i \in \sigma, \;\; \text{and}\\
(Ax^*)_k &\leq& m(x^*) - \theta, \quad \text{for all} \;\; k \notin \sigma. 
\end{eqnarray*}
In particular,
$$(Ax^*)_k \leq (Ax^*)_i  = (Ax^*)_j  \quad \text{for all} \;\; i,j \in \sigma, \; k \notin \sigma.$$
\end{lemma}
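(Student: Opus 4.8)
The plan is to work directly from the scalar fixed point equation derived just above,
$$x_i^* = \left[x_i^* - m(x^*) + (Ax^*)_i + \theta\right]_+,$$
treating the ``on'' neurons $i\in\sigma$ and the ``off'' neurons $k\notin\sigma$ separately. The key point is that the nonlinearity $[\,\cdot\,]_+$ must be active on $\sigma$ (where $x_i^*>0$) and inactive off $\sigma$ (where $x_k^*=0$), which is exactly what produces an equality on the support and an inequality off it.

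First I would fix $i\in\sigma$ and use $x_i^*>0$. Since the right-hand side equals the strictly positive number $x_i^*$, the bracketed argument must itself be positive, so the $[\,\cdot\,]_+$ can be dropped; cancelling $x_i^*$ from both sides leaves $(Ax^*)_i = m(x^*)-\theta$. For the converse, assuming this equality makes the bracketed argument equal to $x_i^*>0$, so the equation holds at $i$. Next I would fix $k\notin\sigma$ and use $x_k^*=0$: the equation reduces to $0=[-m(x^*)+(Ax^*)_k+\theta]_+$, which holds precisely when the argument is nonpositive, i.e. $(Ax^*)_k\le m(x^*)-\theta$. Combining the two cases gives the stated ``if and only if'' characterization.

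The final ``in particular'' line then follows with no extra work: every $(Ax^*)_i$ for $i\in\sigma$ equals the common value $m(x^*)-\theta$, and every $(Ax^*)_k$ for $k\notin\sigma$ is at most this same value. I do not expect a genuine obstacle here, since each step is an elementary manipulation of a single scalar equation; the only place demanding care is the bookkeeping around the nonlinearity, in particular checking that for on-neurons the derived equality really forces the bracket argument to be positive (so that $x^*$ is a true fixed point and not merely a solution of the linearized equation), and that the positive part is removed with the correct sign condition in each direction of the equivalence.
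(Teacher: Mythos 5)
Your proposal is correct and follows exactly the route the paper takes: the displayed computation preceding the lemma rewrites the fixed point equation as $x_i^* = [x_i^* - m(x^*) + (Ax^*)_i + \theta]_+$, and the paper obtains the lemma by the same case split on $i\in\sigma$ (where $x_i^*>0$ forces the bracket to be active, yielding the equality) versus $k\notin\sigma$ (where $x_k^*=0$ forces the argument to be nonpositive, yielding the inequality). Your attention to the converse direction and to the sign of the bracket argument is exactly the bookkeeping the paper leaves implicit.
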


As an immediate consequence, we can list conditions that must be satisfied for fixed points consisting of a single active neuron $i$, with firing rate $x_i^*>0$.  In this case,
$m(x^*) = x_i^*$, and so by Lemma~\ref{lemma:fixed-pts},
$$(Ax^*)_i = A_{ii}x_i^* = x_i^* - \theta, \quad \text{and} \quad (Ax^*)_k = A_{ki}x_i^* \leq A_{ii} x_i^*, \text{ for all } k \neq i.$$
This allows us to solve for $x_i^* = \theta/(1-A_{ii}),$ and to conclude $A_{ki} \leq A_{ii}$.  In order for this fixed point
to be {\it stable}, we must also have $A_{ii} < 1$, and hence $\theta > 0$ because $x_i^* > 0$.  We collect these observations
in the following proposition, together with the case of empty support, $x^*=0$.

\begin{proposition}\label{prop:singleton}
Suppose $|\sigma| \leq 1$.  If $\sigma = \{i\}$, then there exists a stable fixed point $x^*$ of ~\eqref{eq:network} with support $\sigma$ if and only if the following all hold:
\begin{itemize}
\item[(i)] $A_{ii} < 1$ (equivalently, $W_{ii} < 1$),
\item[(ii)] $\theta > 0$, and
\item[(iii)] $A_{ki} \leq A_{ii}$ (equivalently, $W_{ki} \leq -1+W_{ii}$) for all $k \neq i$.
\end{itemize}
Moreover, if this stable fixed point exists, then it is given by $x_k^* = 0$ for all $k \neq i$ and
$$x_i^* =  \dfrac{\theta}{1-A_{ii}}.$$
Alternatively, if $\sigma = \emptyset$, then $x^* = 0$ is a fixed point of~\eqref{eq:network} if and only if $\theta \leq 0$.  If $\theta < 0$, then this fixed point is guaranteed to be stable.
\end{proposition}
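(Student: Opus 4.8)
The plan is to split the argument into the singleton case $\sigma=\{i\}$ and the empty case $\sigma=\emptyset$, proving both directions of each ``if and only if.'' For the singleton, the forward (necessity) direction is essentially the computation already carried out immediately before the statement, so I would package those observations and supply the converse together with the stability criterion; for the empty case I would argue directly from~\eqref{eq:network}.

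For the singleton case I would first prove necessity. If $x^*$ is a stable fixed point with $\supp(x^*)=\{i\}$, then $m(x^*)=x_i^*>0$, and Lemma~\ref{lemma:fixed-pts} with $\sigma=\{i\}$ yields $A_{ii}x_i^* = x_i^*-\theta$ together with $A_{ki}x_i^* \le A_{ii}x_i^*$ for all $k\neq i$; since $x_i^*>0$, the latter is exactly (iii). Stability enters through the permitted-set criterion~\eqref{eq:P(W)}: a stable fixed point has its support in $\P(W)$, and for $\sigma=\{i\}$ the matrix $-11^T+A_\sigma$ collapses to the scalar $-1+A_{ii}$, which is a stable matrix precisely when $A_{ii}<1$, giving (i). Rewriting $A_{ii}x_i^* = x_i^*-\theta$ as $(1-A_{ii})x_i^*=\theta$ and using $1-A_{ii}>0$ with $x_i^*>0$ forces $\theta>0$, which is (ii), and at the same time produces the formula $x_i^*=\theta/(1-A_{ii})$. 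For sufficiency I would run this backwards: assuming (i)--(iii), set $x_i^*=\theta/(1-A_{ii})$ and $x_k^*=0$ for $k\neq i$, observe that $x_i^*>0$ by (i) and (ii), and verify the two families of relations in Lemma~\ref{lemma:fixed-pts} --- the on-neuron equation holds by construction and the off-neuron inequalities follow from (iii) and $x_i^*>0$ --- so $x^*$ is a fixed point, whose support lies in $\P(W)$ by (i), hence it is stable.

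For the empty case, evaluating the right-hand side of~\eqref{eq:network} at $x=0$ gives $\dot x_i=[\theta]_+$, so $x^*=0$ is a fixed point if and only if $[\theta]_+=0$, that is $\theta\le 0$. When $\theta<0$, every neuron receives strictly negative net input at the origin, so for all $x\ge 0$ in a small neighborhood the argument $\sum_j W_{ij}x_j+\theta$ stays negative and the local flow reduces to $\dot x_i=-x_i$, which contracts to the origin; this gives asymptotic stability and explains why the borderline value $\theta=0$ is excluded from the stability claim.

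The arithmetic here is forced by Lemma~\ref{lemma:fixed-pts}, so the only delicate point is the stability bookkeeping: one must justify that for these minimal supports the dynamical stability of the fixed point is captured exactly by the permitted-set condition --- the scalar inequality $A_{ii}<1$ in the singleton case and the sign of $\theta$ in the empty case --- while keeping the strict versus weak inequalities straight. Conditions (i) and (ii) are strict because they encode genuine asymptotic stability and strict positivity of the firing rate, whereas (iii) is only a weak inequality since it merely records, via the off-neuron relation of Lemma~\ref{lemma:fixed-pts}, that each silent neuron $k$ sits at or below threshold.
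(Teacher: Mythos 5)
Your proposal is correct and follows essentially the same route as the paper, which does not give a separate formal proof but derives the singleton conditions in the paragraph immediately preceding the statement (solving $A_{ii}x_i^*=x_i^*-\theta$ and $A_{ki}x_i^*\le A_{ii}x_i^*$ from Lemma~\ref{lemma:fixed-pts}, and reading off $A_{ii}<1$ from the scalar permitted-set condition) and leaves the converse and the $\sigma=\emptyset$ case as immediate. Your explicit sufficiency check and the contraction argument for $\theta<0$ fill in exactly what the paper takes as read, using the same ingredients.
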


\noindent Analogous conditions can be obtained for $|\sigma|>1,$ so long as $A_{\sigma}$ is not ``fine-tuned,'' as defined below.

\begin{definition} \normalfont
We say that a principal submatrix $A_\sigma$ of $A$ is {\em fine-tuned} if either of the following are true:
\begin{itemize}
\item[(a)] $\det(A_\sigma) = 0$, or
\item[(b)] $\sum_{i,j\in\sigma} A_{ki} (A_\sigma^{-1})_{ij} = 1$ for some $k \notin \sigma.$ 
\end{itemize}
Note that (b) depends on entries of the full matrix $A,$ not just entries of $A_\sigma$.
\end{definition}

\noindent We can now state and prove Theorem~\ref{thm:fixed-pts}.  

\begin{theorem}[General fixed point conditions]\label{thm:fixed-pts}
Consider the threshold-linear network~\eqref{eq:network}, 
and let $A = 11^T - I + W$, as in~\eqref{eq:A}.  Suppose 
$\sigma \subseteq [n]$ is nonempty and $A_\sigma$ is not fine-tuned.
Then there exists a stable fixed point with support $\sigma$ if and only if $\theta \neq 0$ and the following three conditions hold:
\begin{itemize}
\item[(i)]  (permitted set condition) $-11^T + A_\sigma$ is a stable matrix.
\item[(ii)] (on-neuron conditions) There are two cases, depending on the sign of $\theta$:
\begin{itemize}
\item[(a)] $\theta > 0$: either $a_\sigma < 0$ and $A_\sigma^{-1} 1_\sigma < 0$, or $a_\sigma> 1$ and $A_\sigma^{-1} 1_\sigma > 0.$
\item[(b)] $\theta < 0$: $0 < a_\sigma < 1$ and $A_\sigma^{-1} 1_\sigma > 0.$
\end{itemize}
\item[(iii)]  (off-neuron conditions)  For each $k \notin \sigma$, 
$\displaystyle \dfrac{\theta}{a_\sigma -1}\sum_{i,j\in\sigma} A_{ki}(A_\sigma^{-1})_{ij} < \dfrac{\theta}{a_\sigma -1}.$
\end{itemize}
Moreover, if a stable fixed point $x^*$ with $\supp(x^*)=\sigma$ exists, then it is given by
$$x_\sigma^* = \dfrac{\theta}{a_\sigma-1} A_\sigma^{-1} 1_\sigma,$$
with total population activity
$$m(x^*) = \dfrac{a_\sigma}{a_\sigma -1}\theta.$$
\end{theorem}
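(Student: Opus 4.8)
The plan is to use Lemma~\ref{lemma:fixed-pts} to convert the fixed-point problem into linear algebra, extract the unique candidate solution, and then read off the three conditions as, respectively, positivity of the candidate (condition~(ii)), strict suppression of the off-neurons (condition~(iii)), and stability of the active sub-block (condition~(i)).

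First I would fix $\supp(x^*)=\sigma$ and apply Lemma~\ref{lemma:fixed-pts}. Since $x_k^*=0$ for $k\notin\sigma$, the on-neuron equalities collapse to $A_\sigma x_\sigma^* = (m(x^*)-\theta)\,1_\sigma$, and the non-fine-tuned hypothesis guarantees $\det(A_\sigma)\neq 0$, so $x_\sigma^* = (m(x^*)-\theta)\,A_\sigma^{-1}1_\sigma$. Left-multiplying by $1_\sigma^T$, using $m(x^*)=1_\sigma^T x_\sigma^*$ and the definition $a_\sigma = 1_\sigma^T A_\sigma^{-1}1_\sigma$ from~\eqref{eq:a_sigma}, yields the scalar relation $m(x^*)=(m(x^*)-\theta)\,a_\sigma$. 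If $a_\sigma=1$ this forces $\theta=0$; conversely $\theta=0$ forces $x_\sigma^*=0$, contradicting $\supp(x^*)=\sigma$. Hence any fixed point with support $\sigma$ requires $a_\sigma\neq 1$ and $\theta\neq 0$, and then $m(x^*)=\tfrac{a_\sigma}{a_\sigma-1}\theta$ and $m(x^*)-\theta=\tfrac{\theta}{a_\sigma-1}$, giving the stated closed forms and establishing uniqueness of the candidate.

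Next I would translate the remaining two fixed-point requirements. The support condition is precisely $x_\sigma^*>0$; writing $x_\sigma^* = \tfrac{\theta}{a_\sigma-1}A_\sigma^{-1}1_\sigma$ and using $a_\sigma = 1_\sigma^T(A_\sigma^{-1}1_\sigma)$ --- so that $A_\sigma^{-1}1_\sigma>0$ forces $a_\sigma>0$ while $A_\sigma^{-1}1_\sigma<0$ forces $a_\sigma<0$ --- a short case analysis on the signs of $\theta$ and of $\tfrac{\theta}{a_\sigma-1}$ recovers exactly the alternatives of condition~(ii). For the off-neurons, substituting the candidate gives $(Ax^*)_k = \tfrac{\theta}{a_\sigma-1}\sum_{i,j\in\sigma}A_{ki}(A_\sigma^{-1})_{ij}$ while $m(x^*)-\theta=\tfrac{\theta}{a_\sigma-1}$, so the inequality $(Ax^*)_k\le m(x^*)-\theta$ of Lemma~\ref{lemma:fixed-pts} is exactly the non-strict version of the inequality in condition~(iii). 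Fine-tuned condition~(b) rules out equality, so this non-strict inequality and the strict one in~(iii) coincide.

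Finally, stability. Near a fixed point with support $\sigma$ and strictly suppressed off-neurons, the flow is linear on the region where exactly the neurons of $\sigma$ are active, with Jacobian $(-I+W)_\sigma = -11^T+A_\sigma$; strictness of~(iii) keeps each off-neuron's input negative, hence pinned at $0$, under small perturbations. Thus $x^*$ is stable iff $-11^T+A_\sigma$ is a stable matrix, i.e.\ iff $\sigma\in\P(W)$ by~\eqref{eq:P(W)}, which is condition~(i). I expect this stability step to be the main obstacle: it requires justifying carefully that the piecewise-linear dynamics near $x^*$ reduce to the active sub-block --- in particular that no off-neuron can cross threshold under perturbation, which is exactly where strict~(iii) and the non-fine-tuned hypothesis enter --- and relies on the established reduction of threshold-linear fixed-point stability to stability of the active block \cite{HahnSeungSlotine, flex-memory}. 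Assembling the three translations with this stability equivalence yields both directions of the characterization.
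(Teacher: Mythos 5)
Your route is the same as the paper's: reduce via Lemma~\ref{lemma:fixed-pts} to $A_\sigma x_\sigma^* = (m(x^*)-\theta)1_\sigma$, invert using the non-fine-tuned hypothesis, sum to get $m(x^*) = (m(x^*)-\theta)a_\sigma$, read off (ii) from positivity of the candidate, (iii) from the off-neuron inequality with fine-tuning clause (b) supplying strictness, and (i) from the standard reduction of stability to the active block (the paper also defers this last step to the literature, citing Proposition 4 of \cite{net-encoding} for the converse direction). However, one step is genuinely wrong: the claim that ``$\theta=0$ forces $x_\sigma^*=0$.'' It does not. If $\theta=0$, the on-neuron equations become $(-11^T+A_\sigma)x_\sigma^*=0$, which admits a nonzero --- and possibly strictly positive --- solution exactly when $-11^T+A_\sigma$ is singular, i.e.\ when $a_\sigma=1$ (in that case any multiple of $A_\sigma^{-1}1_\sigma$ solves it). Nothing in the non-fine-tuned hypothesis excludes $a_\sigma=1$; it only gives $\det(A_\sigma)\neq 0$. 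So your argument establishes neither $\theta\neq 0$ nor $a_\sigma\neq 1$, and without $a_\sigma\neq 1$ the closed forms $m(x^*)=\tfrac{a_\sigma}{a_\sigma-1}\theta$ and $x_\sigma^*=\tfrac{\theta}{a_\sigma-1}A_\sigma^{-1}1_\sigma$, on which everything downstream depends, are undefined.

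The missing ingredient is condition (i) together with the determinant formula~\eqref{eq:det-formula}: since $a_\sigma=-\cm(A_\sigma)/\det(A_\sigma)$, one has $\det(-11^T+A_\sigma)=\det(A_\sigma)+\cm(A_\sigma)=\det(A_\sigma)(1-a_\sigma)$, and a stable matrix has nonzero determinant, so stability of $-11^T+A_\sigma$ forces $a_\sigma\neq 1$; the scalar relation $m(x^*)=(m(x^*)-\theta)a_\sigma$ with $m(x^*)>0$ then gives $\theta\neq 0$. Equivalently and more directly: $\theta=0$ with nonempty support would make $(-I+W)_\sigma x_\sigma^*=0$ with $x_\sigma^*\neq 0$, so $(-I+W)_\sigma$ would be singular and hence not stable, contradicting that $\sigma$ is a permitted set by~\eqref{eq:P(W)}. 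This is precisely how the paper closes the argument; with that repair the rest of your proposal goes through.
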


\begin{remark}\normalfont
For fixed points supported on a single neuron, $\sigma = \{i\}$, we may have $A_{ii} = 0$ so that $A_\sigma$ is not invertible and is thus not covered by the theorem.
This case, together with the case $\sigma = \emptyset$, 
is covered by Proposition~\ref{prop:singleton}.
\end{remark}

\begin{proof}[Proof of Theorem~\ref{thm:fixed-pts}]
($\Rightarrow$) Suppose there exists a stable fixed point $x^*$ with nonempty support, $ \supp(x^*) = \sigma.$  Then $\sigma$ is a permitted set and $(-11^T+A)_\sigma$ is a stable matrix (see equation~\eqref{eq:P(W)}), so condition (i) holds.  Next, observe that by Lemma~\ref{lemma:fixed-pts} we have
$(A x^*)_\sigma = A_\sigma x_\sigma^* = (m(x^*)-\theta)1_\sigma$. Since $\det(A_\sigma)\neq 0$ because $A_\sigma$ is not fine-tuned, we can write 
$$x_\sigma^* = (m(x^*)-\theta)A_\sigma^{-1} 1_\sigma.$$  
Summing the entries of $x_\sigma^*$ we obtain 
\begin{equation}\label{eq:m}
m(x^*) = (m(x^*)-\theta) a_\sigma.
\end{equation}
Since $m(x^*)>0,$ we must have $a_\sigma \neq 0$. 
To see that $\theta \neq 0$, note that if $\theta = 0$, then $a_\sigma = 1$ and hence $\det(A_\sigma)+\cm(A_\sigma) = 0.$  Using the determinant formula~\eqref{eq:det-formula},
we see that this implies $\det(-11^T+A_\sigma) = 0$, which contradicts the fact that 
$-11^T+A_\sigma$ is a stable matrix.
We can thus conclude that $\theta \neq 0$, which in turn implies that $a_\sigma \neq 1$,
using equation~\eqref{eq:m}.
This allows us to solve for $m(x^*)$ and $x^*$ as:
$$m(x^*) = \dfrac{a_\sigma}{a_\sigma -1}\theta, \quad \text{and} 
\quad x_\sigma^* = \dfrac{\theta}{a_\sigma -1} A_\sigma^{-1} 1_\sigma,$$
yielding the desired equations for $x_\sigma^*$ and $m(x^*)$.

To show that condition (ii) holds, we split into two cases depending on the sign of $\theta$.
Case 1: $\theta > 0$.  Because $m(x^*)>0$, we must have $\dfrac{a_\sigma}{a_\sigma-1}>0$, which implies
either $a_\sigma < 0$ or $a_\sigma>1$.  Since $x_\sigma^*>0$, in the $a_\sigma<0$ case we must have
$A_\sigma^{-1} 1_\sigma<0$, and in the $a_\sigma > 1$ case $A_\sigma^{-1} 1_\sigma>0.$
Case 2: $\theta < 0$.  Now $m(x^*)>0$ implies $\dfrac{a_\sigma}{a_\sigma-1}<0$, which yields
$0 < a_\sigma < 1$.  Since $x_\sigma^*>0$, we must have $A_\sigma^{-1} 1_\sigma>0.$
Altogether, these observations give condition (ii).

Finally, we find that for $k \notin \sigma$,
$$(Ax^*)_k = \sum_{i \in \sigma} A_{ki} x_i^* = \dfrac{\theta}{a_\sigma -1} \sum_{i\in\sigma} A_{ki} \sum_{j\in\sigma} (A_\sigma^{-1})_{ij} \leq m(x^*)-\theta=\dfrac{\theta}{a_\sigma -1} ,$$
where the inequality follows because $(Ax^*)_k \leq m(x^*)-\theta$ by Lemma~\ref{lemma:fixed-pts}.
Because $A_\sigma$ is not fine-tuned, this inequality must be strict, yielding condition (iii).

($\Leftarrow$) Now, suppose $\theta \neq 0$, and conditions (i)-(iii) all hold for a given $\sigma$, with $A_\sigma$ not fine-tuned.  Consider the ansatz:
$$x_\sigma^* = \dfrac{\theta}{a_\sigma -1} A_\sigma^{-1} 1_\sigma,$$
with $x_k^* = 0$ for $k \notin \sigma$.
By (i), $\sigma$ is a permitted set and hence any fixed point with support $\sigma$ is guaranteed to be stable, because $A_\sigma$ does not satisfy condition (b) of the definition of fine-tuned (this follows from Proposition 4 of \cite{net-encoding}).  
It thus suffices to check that $x_\sigma^*>0$ and that $x^*$ satisfies the fixed point conditions in Lemma~\ref{lemma:fixed-pts}.  Since by condition (ii) we assume in all cases that  $\dfrac{\theta}{a_\sigma -1} A_\sigma^{-1} 1_\sigma>0,$ clearly $x_\sigma^*>0$.  Moreover, it is easy to see that the fixed point conditions in Lemma~\ref{lemma:fixed-pts} are satisfied, using our assumption that condition (iii) holds and the fact that $m(x^*)-\theta = \dfrac{\theta}{a_\sigma -1}$.
\end{proof}

\subsection{Inhibitory fixed point conditions}

In special cases, the fixed point conditions can be stated in simpler terms than what we had in Theorem~\ref{thm:fixed-pts}.  Here we consider the case where $W$ is {\it inhibitory}, so that $W_{ij} \leq 0$ for all $i,j \in [n]$.  In Section~\ref{sec:symm-fixed-point}, we will see a very similar simplification when $W$ is symmetric.

\begin{theorem}\label{Thm0}
Consider the threshold-linear network~\eqref{eq:network}, with $W$ satisfying $-1 \leq W_{ij} \leq 0$ for all $i,j \in [n]$.  Let $A = 11^T - I + W$,
as in~\eqref{eq:A}.
If $\theta \leq 0$, then $x^*=0$ is the unique stable fixed point of~\eqref{eq:network}.  
If $\sigma \subseteq [n]$ is nonempty and $A_\sigma$ is not fine-tuned, then
there exists a stable fixed point with support $\sigma$ if and only if $\theta > 0$ and the following three conditions hold:
\begin{itemize}
\item[(i)]  (permitted set condition) $-11^T + A_\sigma$ is a stable matrix.
\item[(ii)] (on-neuron conditions)  $A_\sigma^{-1} 1_\sigma > 0$ and $a_\sigma> 1.$ 
\item[(iii)]  (off-neuron conditions) 
$\displaystyle{\sum_{i,j \in \sigma} A_{ki}(A_\sigma^{-1})_{ij} < 1}$ for each $k \notin \sigma$.
\end{itemize}
\end{theorem}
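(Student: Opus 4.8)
The plan is to split along the sign of $\theta$, dispatching the case $\theta \le 0$ by a direct argument about the flow and reducing the case $\theta > 0$ to the general conditions of Theorem~\ref{thm:fixed-pts}. For $\theta \le 0$, I would observe that the inhibitory hypothesis switches off the nonlinearity on the entire nonnegative orthant: for any $x \ge 0$ and any $i$, we have $(Wx+\theta)_i = \sum_j W_{ij}x_j + \theta \le 0$, since each $W_{ij}\le 0$, $x_j\ge 0$, and $\theta \le 0$, so $[Wx+\theta]_+ = 0$. Hence~\eqref{eq:network} collapses to the linear system $\dot{x} = -x$ on $\RR^n_{\geq 0}$, whose trajectories satisfy $x(t) = e^{-t}x(0) \to 0$. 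Thus $x^* = 0$ is the unique fixed point with nonnegative entries and is globally (exponentially) stable, hence the unique stable fixed point. As a byproduct, any stable fixed point with nonempty support must have $\theta > 0$, which establishes the ``$\theta>0$'' clause of the second part.

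For $\theta > 0$ I would invoke Theorem~\ref{thm:fixed-pts}, all of whose hypotheses ($\sigma$ nonempty, $A_\sigma$ not fine-tuned, $\theta \neq 0$) are in force. The backward implication is then immediate, since conditions (i)--(iii) of the present theorem translate verbatim into those of Theorem~\ref{thm:fixed-pts}: condition (i) is identical; the pair $A_\sigma^{-1}1_\sigma > 0$, $a_\sigma > 1$ is exactly the second alternative of the on-neuron condition (ii)(a) for $\theta>0$; and because $a_\sigma > 1$ and $\theta > 0$ make $\theta/(a_\sigma-1) > 0$, multiplying the inequality in (iii) by this positive factor recovers the off-neuron condition of Theorem~\ref{thm:fixed-pts}. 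All of the content therefore lies in the forward implication, where I must rule out the \emph{other} alternative of (ii)(a), namely $a_\sigma < 0$ with $A_\sigma^{-1}1_\sigma < 0$, so that only $a_\sigma > 1$ and $A_\sigma^{-1}1_\sigma > 0$ survive; the off-neuron simplification then follows by the same division.

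The crux, which I expect to be the main obstacle, is excluding the case $a_\sigma < 0$. My plan is to prove instead that the total population activity obeys $m(x^*) > \theta$, which is equivalent to $a_\sigma > 1$: from the identity $m(x^*) = \frac{a_\sigma}{a_\sigma-1}\theta$ in Theorem~\ref{thm:fixed-pts} and $\theta>0$, one has $m(x^*) > \theta \iff \frac{a_\sigma}{a_\sigma-1} > 1 \iff a_\sigma > 1$, whereas $a_\sigma < 0$ would force $0 < m(x^*) < \theta$. To obtain $m(x^*) > \theta$ I would apply Lemma~\ref{lemma:fixed-pts} to a stable fixed point $x^*$ with support $\sigma$: for every $i \in \sigma$, $(Ax^*)_i = m(x^*) - \theta$. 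Writing $A = 11^T - I + W$ and using the zero-diagonal convention ($A_{ii}=0$), this reads $m(x^*)-\theta = \sum_{j \in \sigma,\, j \neq i} A_{ij} x_j^*$, where each off-diagonal entry $A_{ij} = 1 + W_{ij} \ge 0$ precisely because $W_{ij} \ge -1$, and each $x_j^* > 0$. Hence $m(x^*) - \theta \ge 0$, with equality only if $A_{ij} = 0$ for all $i \neq j$ in $\sigma$, i.e. $A_\sigma = 0$, which forces $\det(A_\sigma) = 0$ and contradicts the non-fine-tuned hypothesis. Therefore $m(x^*) > \theta$ and $a_\sigma > 1$. This is exactly where both halves of the inhibitory bound enter: $W_{ij}\le 0$ powers the $\theta\le 0$ argument, while the lower bound $W_{ij}\ge -1$ is what makes the entries $A_{ij}$ nonnegative and drives $m(x^*)\ge\theta$. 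With $a_\sigma > 1$ in hand, $A_\sigma^{-1}1_\sigma > 0$ is read off from $x_\sigma^* = \frac{\theta}{a_\sigma-1}A_\sigma^{-1}1_\sigma > 0$, and the remaining conditions follow as described above.
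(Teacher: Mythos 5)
Your handling of the $\theta\le 0$ case and of the backward implication is fine and matches the paper's. The gap is in the crux step, exactly where you predicted the difficulty: to rule out $a_\sigma<0$ you invoke ``the zero-diagonal convention ($A_{ii}=0$),'' but Theorem~\ref{Thm0} does not assume $W$ has zero diagonal --- it only assumes $-1\le W_{ij}\le 0$ for all $i,j$, so $A_{ii}=W_{ii}$ may be as negative as $-1$. The inhibitory hypothesis makes the \emph{off-diagonal} entries $A_{ij}=1+W_{ij}$ nonnegative, but not the diagonal ones. Consequently the on-neuron identity reads $m(x^*)-\theta=A_{ii}x_i^*+\sum_{j\in\sigma,\,j\ne i}A_{ij}x_j^*$, the first term can be negative, and $m(x^*)\ge\theta$ does not follow. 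Worse, the inequality $m(x^*)>\theta$ you are trying to prove can genuinely fail under the stated hypotheses: take $n=1$, $W_{11}=-1$, $\theta>0$. Then $A_\sigma=[-1]$ is not fine-tuned, $x^*=\theta/2$ is a stable fixed point with support $\sigma=\{1\}$, yet $m(x^*)=\theta/2<\theta$ and $a_\sigma=-1<0$. So this route cannot be repaired without an additional zero-diagonal hypothesis.

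The paper excludes $a_\sigma<0$ by a different mechanism: if $a_\sigma<0$ then $\theta/(a_\sigma-1)<0$, so the off-neuron condition (iii) of Theorem~\ref{thm:fixed-pts} flips to $\sum_{i,j\in\sigma}A_{ki}(A_\sigma^{-1})_{ij}>1$ for each $k\notin\sigma$, while the on-neuron condition forces $A_\sigma^{-1}1_\sigma<0$; since each $A_{ki}=1+W_{ki}\ge 0$ is a genuinely off-diagonal entry ($k\notin\sigma$, $i\in\sigma$), the sum is $\le 0$, a contradiction. That argument uses only nonnegativity of entries $A_{ki}$ with $k\ne i$, which the inhibitory hypothesis does deliver --- though it tacitly requires at least one off-neuron $k\notin\sigma$ to exist, and the one-neuron example above (where $\sigma=[n]$) is precisely the case it does not cover. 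If you add the hypothesis $W_{ii}=0$, your moment-style derivation of $m(x^*)>\theta$ becomes correct and is arguably cleaner than the paper's, since it needs no off-neurons; without that hypothesis, it is a genuine gap.
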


\begin{proof}
If $W$ is inhibitory and $\theta \leq 0$, then the only possible fixed point of~\eqref{eq:network} is $x^*=0$,
because the fixed point equations are $x^* = [Wx^* + \theta]_+$, and $Wx^* + \theta \leq 0$ for any $x^* \geq 0$.
By Proposition~\ref{prop:singleton}, this fixed point is guaranteed to be stable for $\theta < 0$.  Since $W$ is inhibitory, however, $x^* = 0$ is also a stable fixed point for $\theta = 0$.

To show the remaining statements, we fix nonempty $\sigma \subseteq [n]$, where $A_\sigma$ is not fine-tuned, and apply Theorem~\ref{thm:fixed-pts}.
It follows from the arguments above that if a stable fixed point $x^*$ with support $\sigma$ exists, then $\theta > 0$.  Condition (i) follows directly from condition (i) of Theorem~\ref{thm:fixed-pts}.  Since $\theta > 0$, condition (ii,b) from Theorem~\ref{thm:fixed-pts} does not apply, so it suffices to consider condition (ii,a), which splits into $a_\sigma < 0$ and $a_\sigma>1$ cases.  The remainder of this proof consists of showing that the $a_\sigma < 0$ case does not apply, yielding the appropriate specializations of conditions (ii) and (iii) above.

 Suppose $a_\sigma < 0$.  Then conditions (ii) and (iii) of Theorem~\ref{thm:fixed-pts} cannot be simultaneously satisfied, since condition (ii) requires $A_\sigma^{-1}1_\sigma<0$ and
 condition (iii) simplifies to 
$$\sum_{i,j \in \sigma} A_{ki}(A_\sigma^{-1})_{ij} > 1, \quad \text{for each} \;\; k \notin\sigma.$$
To see the contradiction, observe that $(A_\sigma^{-1}1_\sigma)_i = \sum_{j \in \sigma} (A_\sigma^{-1})_{ij}<0$ by condition (ii),
and so
$$\sum_{i,j \in \sigma} A_{ki}(A_\sigma^{-1})_{ij} = \sum_{i\in \sigma} A_{ki} \sum_{j \in \sigma} (A_\sigma^{-1})_{ij} \leq 0,$$
since $A_{ki} = W_{ki} + 1 \geq 0$, by assumption.
We can thus eliminate the $a_\sigma<0$ case from condition (ii,a), as only the $a_\sigma > 1$ case can apply.  Finally, for $\theta>0$ and $a_\sigma>1$ we see that condition (iii) of Theorem~\ref{thm:fixed-pts} simplifies to $\sum_{i,j \in \sigma} A_{ki}(A_\sigma^{-1})_{ij} < 1.$
\end{proof}

\begin{remark}\normalfont
Like Theorem~\ref{thm:fixed-pts}, Theorem~\ref{Thm0} does not necessarily cover the case of fixed points supported on a single neuron, $\sigma = \{i\}$, when $A_{ii} = 0$.  This case is covered by Proposition~\ref{prop:singleton}. 
\end{remark}

\subsection{Symmetric fixed point conditions}\label{sec:symm-fixed-point}

As in the inhibitory case above, the fixed point conditions of Theorem~\ref{thm:fixed-pts} can also be stated in simpler terms when $W$ is {\it symmetric with zero diagonal}, so that $W_{ij} = W_{ji}$ and $W_{ii}=0$ for all $i,j \in [n]$.
Note that this implies the auxiliary matrix $A$, given by~\eqref{eq:A}, is also symmetric with zero diagonal.

\begin{theorem}\label{Thm1}
Consider the threshold-linear network~\eqref{eq:network}, for $W$ (and hence $A$) symmetric with zero diagonal.  If $\theta < 0$, then $x^*=0$ is the unique stable fixed point of~\eqref{eq:network}.  If $\sigma \subseteq [n]$ is nonempty and 
$A_\sigma$ is not fine-tuned, then there exists a stable fixed point with support $\sigma$ if and only if $\theta > 0$ and the following three conditions hold:
\begin{itemize}
\item[(i)]  (permitted set condition) $-11^T + A_\sigma$ is a stable matrix.
\item[(ii)] (on-neuron conditions)  $A_\sigma^{-1} 1_\sigma > 0$.
\item[(iii)]  (off-neuron conditions) 
$\displaystyle{\sum_{i,j \in \sigma} A_{ki}(A_\sigma^{-1})_{ij} < 1}$ for each $k \notin \sigma$.
\end{itemize}
Moreover, if a stable fixed point $x^*$ with $\supp(x^*)=\sigma$ exists, then $a_\sigma > 1$.
\end{theorem}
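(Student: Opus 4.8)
The strategy is to derive Theorem~\ref{Thm1} as a specialization of the general fixed point conditions in Theorem~\ref{thm:fixed-pts}, using the geometric characterization of permitted sets in Lemma~\ref{lemma:geometric-char} to collapse the two-case on-neuron condition (ii) and to pin down the sign of $\theta$. The single fact doing all the work is that, for symmetric $W$ with zero diagonal, the permitted set condition (i) is equivalent to $A_\sigma$ being a nondegenerate square distance matrix with $a_\sigma>1$; in particular, condition (i) already forces $a_\sigma>1$. This is precisely what makes the symmetric conditions strictly simpler than the general ones, and it immediately supplies the ``moreover'' clause.

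For the characterization part (nonempty $\sigma$ with $A_\sigma$ not fine-tuned), I would argue the equivalence in both directions through Theorem~\ref{thm:fixed-pts}. Assuming a stable fixed point with support $\sigma$ exists, that theorem gives $\theta\neq 0$ together with its conditions (i)--(iii). Condition (i) is unchanged, and via Lemma~\ref{lemma:geometric-char} it yields $a_\sigma>1$. Since $a_\sigma>1$ rules out the regime $0<a_\sigma<1$, the $\theta<0$ branch (ii,b) is impossible, forcing $\theta>0$; and in branch (ii,a) the $a_\sigma<0$ alternative is likewise excluded, leaving exactly $A_\sigma^{-1}1_\sigma>0$, which is the new condition (ii). Finally, because $\theta>0$ and $a_\sigma>1$ make $\theta/(a_\sigma-1)>0$, I can divide the inequality in condition (iii) of Theorem~\ref{thm:fixed-pts} by this positive factor to obtain $\sum_{i,j\in\sigma}A_{ki}(A_\sigma^{-1})_{ij}<1$. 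The converse runs the same steps in reverse: given $\theta>0$ and the new (i)--(iii), Lemma~\ref{lemma:geometric-char} again supplies $a_\sigma>1$, so the new (ii) and (iii) reassemble into branch (ii,a) and condition (iii) of Theorem~\ref{thm:fixed-pts} (multiplying (iii) back by the positive factor $\theta/(a_\sigma-1)$), and that theorem then produces the stable fixed point.

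The one part not covered by Theorem~\ref{thm:fixed-pts} is the claim that $x^*=0$ is the unique stable fixed point when $\theta<0$, since this must range over \emph{all} supports, including fine-tuned $A_\sigma$ and singletons (note that, because $A$ has zero diagonal, every singleton has $\det(A_\sigma)=0$ and is therefore fine-tuned). That $x^*=0$ is stable for $\theta<0$ is Proposition~\ref{prop:singleton}. For uniqueness, I would suppose $x^*$ were a stable fixed point with nonempty support $\sigma$: if $|\sigma|=1$, Proposition~\ref{prop:singleton} requires $\theta>0$, a contradiction; if $|\sigma|>1$, then $\sigma\in\P(W)$, so Lemma~\ref{lemma:geometric-char} gives $a_\sigma>1$ and in particular $A_\sigma$ invertible, whence Lemma~\ref{lemma:fixed-pts} yields $A_\sigma x_\sigma^*=(m(x^*)-\theta)1_\sigma$ and, summing entries, $m(x^*)=a_\sigma\theta/(a_\sigma-1)$, which is negative when $a_\sigma>1$ and $\theta<0$, contradicting $m(x^*)>0$. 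I expect no serious obstacle here: the content is a clean reduction, and the only point demanding care is exactly this uniqueness step, which cannot be routed through Theorem~\ref{thm:fixed-pts} (it assumes $A_\sigma$ is not fine-tuned) and must instead invoke the permitted set condition together with Lemma~\ref{lemma:fixed-pts} and Proposition~\ref{prop:singleton}. Everything else is the bookkeeping of dividing and multiplying condition (iii) by the sign-definite factor $\theta/(a_\sigma-1)$ and eliminating the impossible cases of condition (ii), all powered by the inequality $a_\sigma>1$ coming from distance geometry.
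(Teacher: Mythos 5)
Your proposal is correct and follows essentially the same route as the paper: specialize Theorem~\ref{thm:fixed-pts} by using Lemma~\ref{lemma:geometric-char} to force $a_\sigma>1$, which kills branch (ii,b) and the $a_\sigma<0$ alternative of (ii,a), pins $\theta>0$, and lets you divide condition (iii) by the positive factor $\theta/(a_\sigma-1)$. Your treatment of the $\theta<0$ uniqueness claim is in fact more careful than the paper's one-line appeal to ``the arguments above'' (the paper defers the fine-tuned/singleton subtlety to the proof of Proposition~\ref{prop:symm-fixed-points}), but the underlying logic is the same.
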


\begin{proof}
The theorem consists of three statements, which we prove in reverse order.
First, recall from Lemma~\ref{lemma:geometric-char} that if $-11^T + A_\sigma$ is a stable matrix, then $a_\sigma > 1$, yielding the final statement.  
We now turn to the second statement, and show that it is a special case of Theorem~\ref{thm:fixed-pts}.  Note that we can reduce to the $a_\sigma>1$ case in condition (ii) of Theorem~\ref{thm:fixed-pts}, which only applies if $\theta>0$.  Since we must have $a_\sigma>1$ and $\theta>0$ in order for the off-neuron conditions to be relevant, we see that condition (iii) above is the correct simplification of condition (iii) in Theorem~\ref{thm:fixed-pts}.  
Finally, note that for $\theta < 0$ we cannot have a stable fixed point supported on a nonempty $\sigma$, by the arguments above.  By Proposition~\ref{prop:singleton}, the fixed point $x^* = 0$ with empty support exists and is stable in this case, giving us the first statement.
\end{proof}

Theorem~\ref{Thm1} does not cover the singleton case, $\sigma = \{i\}$, since in this case the matrix $A_\sigma$ is not invertible (because $A_{ii} = 0$) and is thus fine-tuned.  The following variant of Theorem~\ref{Thm1}, which we will use in our proof of Theorem~\ref{Thm2}, does include the singleton case.  It is a consequence of Theorem~\ref{Thm1}, Lemma~\ref{lemma:geometric-char}, Proposition~\ref{prop:singleton} and the proof of Theorem~\ref{thm:fixed-pts} (in order to drop the fine-tuned hypothesis).  Note that Proposition~\ref{prop:symm-fixed-points} is not an ``if and only if'' statement; the conditions here are only the necessary consequences of the existence of a stable fixed point.

\begin{proposition}\label{prop:symm-fixed-points}
Consider the threshold-linear network~\eqref{eq:network}, for $W$ (and hence $A$) symmetric with zero diagonal.  Let $\sigma \subseteq [n]$ be nonempty.  If there exists a stable fixed point with support $\sigma$, then $\theta > 0$ and the following three conditions hold:
\begin{itemize}
\item[(i)] $A_\sigma$ is a nondegenerate square distance matrix and $a_\sigma >1$.
\item[(ii)] $A_\sigma^{-1} 1_\sigma > 0$ if $|\sigma|>1$.
\item[(iii)]  If $|\sigma| > 1$, then
$\displaystyle{\sum_{i,j \in \sigma} A_{ki}(A_\sigma^{-1})_{ij} \leq 1}$ for each $k \notin \sigma$.\\
If $\sigma=\{i\}$, then $A_{ki} \leq 0$ for all $k \neq i$.
\end{itemize}
\end{proposition}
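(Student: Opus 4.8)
The plan is to split on the cardinality of $\sigma$ and, in each case, extract only the necessary (one-directional) consequences from results already proved, being careful to avoid the fine-tuned hypothesis that Theorems~\ref{Thm1} and~\ref{thm:fixed-pts} carry.

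The singleton case $\sigma = \{i\}$ is handled directly by Proposition~\ref{prop:singleton}: the existence of a stable fixed point supported on $\{i\}$ forces $\theta > 0$ and $A_{ki} \leq A_{ii}$ for all $k \neq i$. Since $A$ has zero diagonal, $A_{ii} = 0$, so this gives $\theta > 0$ and $A_{ki} \leq 0$, which is condition (iii) in the singleton case. Condition (i) holds by the stated convention that $A_{\{i\}} = [0]$ is nondegenerate square distance with $a_{\{i\}} = \infty > 1$, and condition (ii) is vacuous since $|\sigma| = 1$.

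For $|\sigma| > 1$, I would replay the forward ($\Rightarrow$) direction of the proof of Theorem~\ref{thm:fixed-pts}, but obtain the invertibility of $A_\sigma$ from the geometry rather than from a fine-tuned assumption. Stability forces $\sigma \in \P(W)$, so $-11^T + A_\sigma$ is a stable matrix; then Lemma~\ref{lemma:geometric-char} gives that $A_\sigma$ is a nondegenerate square distance matrix with $a_\sigma > 1$, which is condition (i), and in particular $A_\sigma$ is invertible (so fine-tuned condition (a), $\det(A_\sigma)=0$, cannot hold). With $A_\sigma$ invertible, Lemma~\ref{lemma:fixed-pts} yields $A_\sigma x_\sigma^* = (m(x^*) - \theta) 1_\sigma$, hence $x_\sigma^* = (m(x^*) - \theta) A_\sigma^{-1} 1_\sigma$ and, summing entries, $m(x^*) = (m(x^*) - \theta) a_\sigma$. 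Solving gives $m(x^*) = \tfrac{a_\sigma}{a_\sigma - 1}\theta$; since $a_\sigma > 1$ makes the coefficient positive while $m(x^*) > 0$, we conclude $\theta > 0$, and then $x_\sigma^* = \tfrac{\theta}{a_\sigma - 1} A_\sigma^{-1} 1_\sigma$ with $\tfrac{\theta}{a_\sigma-1} > 0$. Because $x_\sigma^* > 0$, this forces $A_\sigma^{-1} 1_\sigma > 0$, giving condition (ii).

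For condition (iii), I would use the off-neuron inequality $(Ax^*)_k \leq m(x^*) - \theta = \tfrac{\theta}{a_\sigma - 1}$ from Lemma~\ref{lemma:fixed-pts}, expand $(Ax^*)_k = \tfrac{\theta}{a_\sigma-1}\sum_{i,j\in\sigma}A_{ki}(A_\sigma^{-1})_{ij}$, and divide by the positive quantity $\tfrac{\theta}{a_\sigma-1}$ to get $\sum_{i,j\in\sigma}A_{ki}(A_\sigma^{-1})_{ij} \leq 1$. The crux, and the reason this is stated as a separate proposition, is precisely that here the inequality stays weak: Theorem~\ref{thm:fixed-pts} upgrades it to strict only by invoking the non-fine-tuned hypothesis (fine-tuned condition (b) is exactly the equality case $\sum_{i,j\in\sigma} A_{ki}(A_\sigma^{-1})_{ij}=1$), whereas for a statement of necessary conditions we must allow that boundary case and therefore drop the strictness. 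So the main obstacle is bookkeeping rather than a new idea: verifying that every step of the $\Rightarrow$ direction except the strict-inequality upgrade goes through without assuming $A_\sigma$ is non-fine-tuned, with invertibility supplied by Lemma~\ref{lemma:geometric-char} and the weak off-neuron inequality supplied directly by Lemma~\ref{lemma:fixed-pts}.
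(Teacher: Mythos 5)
Your proposal is correct and follows essentially the same route as the paper: the singleton case via Proposition~\ref{prop:singleton}, condition (i) via Lemma~\ref{lemma:geometric-char}, and conditions (ii)--(iii) by rerunning the forward direction of Theorem~\ref{thm:fixed-pts} with invertibility of $A_\sigma$ supplied by the nondegenerate-square-distance property rather than the non-fine-tuned hypothesis, keeping the off-neuron inequality weak. Your observation that $a_\sigma>1$ together with $m(x^*)>0$ directly forces $\theta>0$ is exactly the reduction the paper performs (via Theorem~\ref{Thm1}), just stated more explicitly.
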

\begin{proof}
Suppose $\sigma$ supports a stable fixed point. Then $\sigma$ must be a permitted set, and so condition (i) follows directly from Lemma~\ref{lemma:geometric-char}.  (Recall our convention from Section~\ref{sec:geometry-background} that $a_\sigma = \infty$ if $\sigma = \{i\}$ and $A_{ii} = 0$.)
Next, observe that the $\theta>0$ requirement in Theorem~\ref{Thm1} holds even if $A_\sigma$ is fine-tuned, because it only required invertibility of $A_\sigma$ in the proof of Theorem~\ref{thm:fixed-pts}.   Since, by condition (i), $A_\sigma$ is a nondegenerate square distance matrix, the only instance when it is not invertible is when $|\sigma|=1$.  In this case, however, 
Proposition~\ref{prop:singleton} implies that $\theta>0$.

For the remaining conditions, we split into two cases: $|\sigma|>1$ and $|\sigma|=1$.
If $|\sigma|>1$, then $A_\sigma$ is invertible, and so $A_\sigma^{-1}1_\sigma>0$ (see the proof of Theorem~\ref{thm:fixed-pts}), yielding condition (ii).  In the proof of the forwards direction of Theorem~\ref{thm:fixed-pts}, we see that condition (iii) holds without the strict inequality even when $A_\sigma$ is fine-tuned, and so the simplified version of this condition in Theorem~\ref{Thm1} also holds, without the strict inequality, for all $A_\sigma$.  This gives us the first part of condition (iii).

If $\sigma=\{i\}$, Proposition~\ref{prop:singleton} provides the on-neuron condition $\theta>0$, which is automatically satisfied, so there is no further addition to condition (ii).  It also gives us the off-neuron condition $A_{ki} \leq A_{ii}$ for all $k \neq i$, which is the rest of condition (iii), since $A_{ii} = 0$.
\end{proof}

\section{Some geometric lemmas and Proposition~\ref{prop:antichain}}\label{sec:geometry}

In addition to Proposition~\ref{prop:symm-fixed-points}, the other main ingredient we will need to prove Theorem~\ref{Thm2} is the following technical result about nondegenerate square distance matrices:

\begin{proposition}\label{prop:antichain}
Let $A$ be an $n \times n$ matrix, and let $\tau \subseteq [n]$ with $|\tau| > 1$.  If $A_\tau$ is a nondegenerate square distance matrix  satisfying $A_\tau^{-1} 1_\tau > 0$, then for any $\sigma \subsetneq \tau$ with $|\sigma|>1$ there exists $k \in \tau\setminus \sigma$ such that $\sum_{i,j \in \sigma} A_{ki} (A_\sigma^{-1})_{ij} > 1.$
\end{proposition}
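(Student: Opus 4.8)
The plan is to translate the algebraic inequality into a statement about circumscribed spheres of simplices, and then to exploit the hypothesis $A_\tau^{-1}1_\tau>0$ through a short averaging argument. First I would set up the distance-geometry dictionary. Since $A_\tau$ is nondegenerate square distance, fix a configuration $\{p_i\}_{i\in\tau}\subset\RR^{|\tau|-1}$ with $A_{ij}=\|p_i-p_j\|^2$ whose convex hull is a full-dimensional simplex; for any $\sigma\subseteq\tau$ with $|\sigma|>1$ the subconfiguration $\{p_i\}_{i\in\sigma}$ is again affinely independent, so $A_\sigma$ is nondegenerate square distance and its circumradius $R_\sigma$ is positive. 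Writing $c_\sigma=\sum_{i\in\sigma}b_i p_i$ for the circumcenter (with $\sum_{i\in\sigma} b_i=1$), the classical identities $A_\sigma b=\tfrac{1}{a_\sigma}1_\sigma$ and $a_\sigma=1/(2R_\sigma^2)$ give $A_\sigma^{-1}1_\sigma=a_\sigma\,b$. Combining this with the parallel-axis identity $\sum_{i\in\sigma}b_i\|p_k-p_i\|^2=\|p_k-c_\sigma\|^2+R_\sigma^2$, I would compute
$$\sum_{i,j\in\sigma}A_{ki}(A_\sigma^{-1})_{ij}=a_\sigma\sum_{i\in\sigma}b_i\|p_k-p_i\|^2=\frac{\|p_k-c_\sigma\|^2}{2R_\sigma^2}+\frac12.$$
Hence $\sum_{i,j\in\sigma}A_{ki}(A_\sigma^{-1})_{ij}>1$ is equivalent to $\|p_k-c_\sigma\|>R_\sigma$; that is, the vertex $p_k$ lies strictly outside the circumsphere $S_\sigma$ of the subsimplex on $\sigma$. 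So it suffices to find $k\in\tau\setminus\sigma$ whose vertex is strictly outside $S_\sigma$.

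Next I would record two facts relating $S_\sigma$ to the circumsphere $S_\tau$ of the full simplex, which has center $c_\tau$, radius $R_\tau$, and passes through every $p_i$, $i\in\tau$. Let $H_\sigma=\operatorname{aff}\{p_i:i\in\sigma\}$ and $v=c_\tau-c_\sigma$. A Pythagoras argument shows that $c_\sigma$ is precisely the orthogonal projection of $c_\tau$ onto $H_\sigma$ (it is the unique point of $H_\sigma$ equidistant from the $p_i$, $i\in\sigma$), so $v\perp H_\sigma$ and $R_\sigma^2=R_\tau^2-\|v\|^2$. Using $\|p_k-c_\tau\|^2=R_\tau^2$, a one-line expansion then yields the clean identity
$$\|p_k-c_\sigma\|^2-R_\sigma^2=2\langle p_k-c_\sigma,\,v\rangle,$$
so that $p_k$ is strictly outside $S_\sigma$ exactly when $\langle p_k-c_\sigma,\,v\rangle>0$.

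Finally I would bring in the hypothesis $A_\tau^{-1}1_\tau>0$, which says the circumcenter $c_\tau=\sum_{i\in\tau}\beta_i p_i$ is a strictly positive convex combination of the vertices ($\beta_i>0$, $\sum_{i\in\tau}\beta_i=1$). Two observations then finish the argument. First, since $\sigma\subsetneq\tau$ and the $p_i$ are affinely independent, $c_\tau$ cannot lie in $H_\sigma$: otherwise its unique barycentric expansion would force $\beta_k=0$ for $k\in\tau\setminus\sigma$, contradicting $\beta_k>0$; hence $v\neq0$ and $\|v\|^2>0$. Second, because $\langle p_i-c_\sigma,\,v\rangle=0$ for every $i\in\sigma$ (as $v\perp H_\sigma$), averaging the inner products against $\beta$ gives
$$\sum_{k\in\tau\setminus\sigma}\beta_k\,\langle p_k-c_\sigma,\,v\rangle=\Big\langle\,\sum_{i\in\tau}\beta_i p_i-c_\sigma,\ v\,\Big\rangle=\langle v,v\rangle=\|v\|^2>0.$$
Since every $\beta_k>0$, at least one term must be positive, producing $k\in\tau\setminus\sigma$ with $\langle p_k-c_\sigma,\,v\rangle>0$, i.e.\ a vertex strictly outside $S_\sigma$; by the dictionary this is exactly the required index.

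I expect the main obstacle to be the first step, namely establishing the dictionary correctly and verifying $A_\sigma^{-1}1_\sigma=a_\sigma b$ together with $a_\sigma=1/(2R_\sigma^2)$ and assembling them via the parallel-axis identity, since these classical distance-geometry facts are what make the whole reduction valid; once $S_\sigma$ is in play, the projection identity and the averaging step are short. I would also keep track that $|\sigma|>1$ guarantees $R_\sigma>0$, so that dividing by $2R_\sigma^2$ is legitimate, and note that it is precisely the \emph{strict} interiority of $c_\tau$ (not merely $c_\tau$ being interior to $\tau$ in a weak sense) that rules out $v=0$ and hence prevents the degenerate case in which all vertices sit \emph{on} $S_\sigma$.
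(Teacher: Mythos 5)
Your proposal is correct and follows essentially the same route as the paper: the identity $\sum_{i,j\in\sigma}A_{ki}(A_\sigma^{-1})_{ij}=\|p_k-c_\sigma\|^2/(2R_\sigma^2)+\tfrac12$ is the paper's Lemmas~\ref{lemma:qcm} and~\ref{lemma:qk} (via the parallel axis theorem and $\rho_\sigma^2=1/(2a_\sigma)$), the hypothesis $A_\tau^{-1}1_\tau>0$ is converted into interiority of the circumcenter exactly as in Lemma~\ref{lemma:convex-hull}, and the projection/perpendicularity picture is that of Lemma~\ref{lemma:hierarchy}. The only (minor) difference is your final step: you locate the required vertex by averaging $\langle p_k-c_\sigma,v\rangle$ against the strictly positive barycentric weights $\beta$, whereas the paper argues by a halfspace separation; your version is a slightly more explicit, airtight rendering of the same idea.
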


\noindent Proposition~\ref{prop:antichain} is key to the proof of Theorem~\ref{Thm2}  because supports of stable fixed points in the symmetric case correspond to nondegenerate square distance matrices $A_\sigma$ (see Lemma~\ref{lemma:geometric-char}).  Recalling Proposition~\ref{prop:symm-fixed-points}, we see that Proposition~\ref{prop:antichain} implies an incompatibility between the fixed point conditions for nested pairs of supports $\sigma,\tau \subseteq [n]$, with $\sigma \subsetneq \tau$.  Specifically, if $\tau$ satisfies both the permitted set condition (i) and the on-neuron conditions (ii) in Proposition~\ref{prop:symm-fixed-points}, then Proposition~\ref{prop:antichain} implies that the off-neuron conditions (iii) cannot hold for a proper subset $\sigma$.

The remainder of this section is devoted to the proof of Proposition~\ref{prop:antichain}.  Note that the proposition assumes $|\sigma|,|\tau|>1$, so that the symmetric matrices $A_\sigma$ and $A_\tau$ are invertible.  We will maintain this assumption throughout this section.

First, we need four geometric lemmas about nondegenerate square distance matrices.  
These lemmas rely on observations involving moments of inertia and center of mass for a particular mass configuration associated to the square distance matrix.  
Let $q_1, \ldots, q_n \in \RR^d$ be a configuration of $n$ points, with masses $y_1, \ldots, y_n$ assigned to each point, respectively.  Recall that the center of mass of this configuration is given by
$$q_{\cm} = \dfrac{1}{m(y)} \sum_{i = 1}^n y_i q_i,$$
where $m(y) = \sum_{i=1}^n y_i$ is the total mass.  The moment of inertia about any point $q \in \RR^d$ is 
$$I_q(y) = \sum_{i=1}^n \Vert q - q_i \Vert^2 y_i.$$

The classical parallel axis theorem states that the moment of inertia depends on the distance between the chosen point $q$ and the center of mass $q_{\cm}$.

\begin{proposition}[Parallel axis theorem]  \label{prop:parallel-axis}
Let $y_1,\ldots,y_n \in \RR$ denote the masses assigned to the points $q_1,\ldots, q_n \in \RR^d$. 
For any $q \in \RR^d$, 
$$I_{q}(y) = m(y) \Vert q - q_{\cm} \Vert^2 + I_{q_{\cm}}(y).$$
\end{proposition}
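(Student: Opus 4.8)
The plan is to prove this by a direct expansion of the definition of $I_q(y)$, inserting and subtracting the center of mass $q_{\cm}$ inside the squared norm, and then using the defining property of $q_{\cm}$ to kill the cross term. This is a classical physics computation, so the approach is computational rather than clever; the only care needed is in bookkeeping the vector algebra.

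First I would write $\Vert q - q_i \Vert^2 = \Vert (q - q_{\cm}) + (q_{\cm} - q_i) \Vert^2$ and expand using the inner product:
$$
\Vert q - q_i \Vert^2 = \Vert q - q_{\cm} \Vert^2 + 2\langle q - q_{\cm},\, q_{\cm} - q_i \rangle + \Vert q_{\cm} - q_i \Vert^2.
$$
Multiplying by $y_i$ and summing over $i$, the three terms produce respectively $m(y)\Vert q - q_{\cm}\Vert^2$ (since $\sum_i y_i = m(y)$ and the first term is independent of $i$), a cross term $2\langle q - q_{\cm},\, \sum_i y_i(q_{\cm} - q_i)\rangle$, and $\sum_i y_i \Vert q_{\cm} - q_i\Vert^2 = I_{q_{\cm}}(y)$ by definition of the moment of inertia about $q_{\cm}$.

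The key step — and the only place anything must be checked — is that the cross term vanishes. I would compute
$$
\sum_{i=1}^n y_i (q_{\cm} - q_i) = m(y)\, q_{\cm} - \sum_{i=1}^n y_i q_i = m(y)\, q_{\cm} - m(y)\, q_{\cm} = 0,
$$
using directly the definition $q_{\cm} = \frac{1}{m(y)}\sum_i y_i q_i$, equivalently $\sum_i y_i q_i = m(y)\, q_{\cm}$. Hence the middle term is zero regardless of the choice of $q$, and the remaining two terms give exactly the claimed identity $I_q(y) = m(y)\Vert q - q_{\cm}\Vert^2 + I_{q_{\cm}}(y)$.

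There is no real obstacle here: the argument is a one-line expansion together with the vanishing of the first-moment-about-the-center-of-mass, which is immediate from how $q_{\cm}$ is defined. The only caveat worth flagging is that the masses $y_i$ are allowed to be arbitrary reals (not necessarily positive), but the computation never uses positivity — it uses only $m(y) = \sum_i y_i \neq 0$ so that $q_{\cm}$ is well-defined, which is implicit in the statement.
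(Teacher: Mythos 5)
Your proof is correct, and it is the standard textbook argument: expand $\Vert q - q_i\Vert^2 = \Vert (q-q_{\cm}) + (q_{\cm}-q_i)\Vert^2$ and observe that the cross term dies because the first moment about the center of mass vanishes. The paper, however, takes a genuinely different route: it first establishes a purely algebraic identity (Proposition~\ref{prop:gen-parallel-axis}) stating that for any $n\times n$ matrix $A$ and any $y$ with $m(y)\neq 0$ there is a unique $\lambda$ with $Ay = \Lambda y$ where $\Lambda_{ij}=\lambda_i+\lambda_j$, and then derives the parallel axis theorem as the special case where $A$ is a square distance matrix, identifying $\lambda_i = \Vert q_i - q_{\cm}\Vert^2$ and $\lambda\cdot y = I_{q_{\cm}}(y)$ (after assuming without loss of generality that $q=q_i$ for some $i$, by adjoining $q$ with mass zero). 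Your argument is shorter and more self-contained for the statement at hand; the paper's detour buys a generalization that makes sense for matrices $A$ that are not distance matrices at all, which the authors exploit in Appendix~\ref{sec:gen-parallel-axis} to reinterpret the off-neuron conditions of Theorems~\ref{Thm0} and~\ref{Thm1} and to suggest a possible extension of Proposition~\ref{prop:antichain} beyond the symmetric case. Both proofs work for arbitrary real masses with $m(y)\neq 0$, as you correctly flag.
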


\noindent This result is well known, and is a staple of undergraduate physics.  It has also been called Appolonius' formula in Euclidean geometry \cite[Section 9.7.6]{Berger}.
For completeness, we provide a proof in Appendix~\ref{sec:gen-parallel-axis}, as a corollary of a novel and more general result.  From our proof, it is clear that Proposition~\ref{prop:parallel-axis} is valid for general dimension $d$ and general mass assignments, including negative masses.

We will exploit Proposition~\ref{prop:parallel-axis} by applying it to point configurations corresponding to nondegenerate square distance matrices, $A_\sigma$.
Let $\{q_i\}_{i \in \sigma}$  be a representing point configuration of $A_\sigma$ in $\RR^{|\sigma|-1}$, so that 
$$A_{ij} = \Vert q_i - q_j \Vert^2 \quad \text{for all } i,j \in \sigma.$$
For this point configuration, let $\{y_i\}_{i\in\sigma}$ be an assignment of masses to each point (not necessarily nonnegative), given by
$$y_i = (A_\sigma^{-1} 1_\sigma)_i \quad \text{for each } i \in \sigma.$$
Recall that because $A_\sigma$ is a nondegenerate square distance matrix and $|\sigma|>1$, $A_\sigma$ is invertible and hence these masses are always well defined.  The total mass is 
$$m(y) = \sum_{i\in\sigma} y_i = \sum_{i,j \in \sigma} (A_\sigma^{-1})_{ij} = a_\sigma.$$
Finally, note that because $\{q_i\}_{i \in \sigma}$ is a nondegenerate point configuration, there exists a unique equidistant point, $p_\sigma \in \RR^{|\sigma|-1},$ so that the distances $\Vert q_i - p_\sigma \Vert$ are the same for all $i \in \sigma$ \cite[Section 9.7.5]{Berger}. The following lemma shows us that the equidistant point coincides with the center of mass for this mass configuration.

\begin{lemma}  \label{lemma:qcm}
Let $A_\sigma$ be a nondegenerate square distance matrix, and assign masses $y_\sigma = A_\sigma^{-1}1_\sigma$ to a representing point configuration $\{q_i\}_{i\in\sigma}$, as described above.  Then
$q_{\cm} = p_\sigma.$
\end{lemma}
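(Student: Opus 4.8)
The plan is to exploit the parallel axis theorem (Proposition~\ref{prop:parallel-axis}) together with the defining property of the mass vector $y_\sigma = A_\sigma^{-1} 1_\sigma$. The crucial first observation is that this mass assignment was engineered precisely so that the moment of inertia about each representing point $q_i$ equals $1$. Indeed, the relation $y_\sigma = A_\sigma^{-1} 1_\sigma$ is equivalent to $A_\sigma y_\sigma = 1_\sigma$, which written out entry-by-entry says
$$\sum_{j \in \sigma} \Vert q_i - q_j \Vert^2\, y_j = 1 \quad \text{for each } i \in \sigma.$$
Since $A_{ij} = \Vert q_i - q_j \Vert^2$ by construction of the representing configuration, the left-hand side is exactly $I_{q_i}(y)$, so I obtain $I_{q_i}(y) = 1$ for every $i \in \sigma$.

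Next I would feed this into the parallel axis theorem. Applying Proposition~\ref{prop:parallel-axis} with $q = q_i$ gives
$$1 = I_{q_i}(y) = m(y)\, \Vert q_i - q_{\cm} \Vert^2 + I_{q_{\cm}}(y)$$
for each $i \in \sigma$. Here both $m(y) = a_\sigma$ and $I_{q_{\cm}}(y)$ are independent of $i$. Because $A_\sigma$ is a nondegenerate square distance matrix with $|\sigma| > 1$, we have $a_\sigma > 0$ (\cite[Corollary 8]{net-encoding}), in particular $a_\sigma \neq 0$, so I can solve for the common value
$$\Vert q_i - q_{\cm} \Vert^2 = \dfrac{1 - I_{q_{\cm}}(y)}{a_\sigma},$$
which does not depend on $i$. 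Thus $q_{\cm}$ is equidistant from all of the points $q_i$.

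Finally, I would invoke uniqueness of the equidistant point. Because $\{q_i\}_{i \in \sigma}$ is a nondegenerate configuration spanning a full-dimensional simplex in $\RR^{|\sigma|-1}$, there is a unique point equidistant from all the $q_i$, namely $p_\sigma$ (as recalled before the lemma statement). Since $q_{\cm}$ is itself a point of $\RR^{|\sigma|-1}$ (an affine combination of the $q_i$, the normalized weights $y_i/m(y)$ summing to $1$), and I have just shown it is equidistant from every $q_i$, it must coincide with $p_\sigma$. The only place where care is needed is the nonvanishing of $a_\sigma$, which is precisely where the nondegeneracy hypothesis enters; everything else is a direct translation of the mass equation $A_\sigma y_\sigma = 1_\sigma$ through the parallel axis theorem.
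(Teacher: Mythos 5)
Your proof is correct and follows essentially the same route as the paper's: compute $I_{q_i}(y) = (A_\sigma A_\sigma^{-1}1_\sigma)_i = 1$, feed this into the parallel axis theorem to conclude that $q_{\cm}$ is equidistant from all the $q_i$, and invoke uniqueness of the equidistant point. The only difference is that you explicitly justify dividing by $m(y) = a_\sigma$ via $a_\sigma > 0$, a step the paper leaves implicit; this is a welcome bit of extra care, not a different argument.
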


\begin{proof}
First, observe that the moment of inertia about any of the points $q_i$ is given by
$$I_{q_i}(y) = \sum_{j \in \sigma} \Vert q_i - q_j \Vert^2 y_j = (A_\sigma y_\sigma)_i = 
(A_\sigma A_\sigma^{-1} 1_\sigma)_i = 1, \quad \text{for each } i \in \sigma.$$
On the other hand, the parallel axis theorem tells us that
$I_{q_i}(y) = m(y)\Vert q_i - q_{\cm} \Vert^2 + I_{q_{\cm}}(y),$
so we must have that $\Vert q_i - q_{\cm} \Vert^2 = \Vert q_j - q_{\cm} \Vert^2$ 
for all $i,j \in \sigma.$  Clearly, these equations imply that $q_{\cm} = p_\sigma.$
\end{proof}

For a given point configuration $\{q_i\}_{i\in\sigma},$ we denote the interior of the convex hull by
$$(\conv{q_i}_{i\in\sigma})^o.$$  The next lemma states that the equidistant point $p_\sigma$ is inside the convex hull of its corresponding point configuration if and only all entries of $A_\sigma^{-1}1_\sigma$ are strictly positive.

\begin{lemma}\label{lemma:convex-hull}
Let $A_\sigma$ be a nondegenerate square distance matrix. Then $A_\sigma^{-1}1_\sigma>0$
if and only if $p_\sigma \in (\conv{q_i}_{i\in\sigma})^o$ for any representing point configuration 
$\{q_i\}_{i\in\sigma}.$
\end{lemma}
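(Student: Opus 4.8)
The plan is to reduce the statement to a standard fact about barycentric coordinates on a simplex, using Lemma~\ref{lemma:qcm} as the bridge. Lemma~\ref{lemma:qcm} already identifies the equidistant point $p_\sigma$ with the center of mass $q_{\cm}$ of the representing configuration $\{q_i\}_{i\in\sigma}$ under the mass assignment $y_\sigma = A_\sigma^{-1}1_\sigma$. Writing out the center of mass explicitly and recalling that the total mass equals $m(y) = a_\sigma$, I would obtain the key identity
$$p_\sigma = q_{\cm} = \frac{1}{a_\sigma}\sum_{i\in\sigma} y_i\, q_i = \sum_{i\in\sigma}\lambda_i\, q_i, \qquad \lambda_i \od \frac{y_i}{a_\sigma}.$$
Since $\sum_{i\in\sigma}\lambda_i = \frac{1}{a_\sigma}\sum_{i\in\sigma} y_i = 1$, the coefficients $\lambda_i$ are precisely the barycentric coordinates of $p_\sigma$ relative to the simplex $\conv{q_i}_{i\in\sigma}$.

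Next I would invoke convex geometry. Because $A_\sigma$ is a nondegenerate square distance matrix with $|\sigma|>1$, the points $\{q_i\}_{i\in\sigma}$ are affinely independent and $\conv{q_i}_{i\in\sigma}$ is a full-dimensional simplex in $\RR^{|\sigma|-1}$. For such a simplex the barycentric coordinates of any point are unique, and a point lies in the relative interior $(\conv{q_i}_{i\in\sigma})^o$ if and only if all of its barycentric coordinates are strictly positive. Applying this characterization to $p_\sigma$ via the identity above, I conclude that $p_\sigma \in (\conv{q_i}_{i\in\sigma})^o$ if and only if $\lambda_i > 0$ for every $i \in \sigma$.

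To finish, I would translate positivity of the $\lambda_i$ back into positivity of $A_\sigma^{-1}1_\sigma$. Here I use the fact (Corollary~8 of \cite{net-encoding}, recalled in Section~\ref{sec:geometry-background}) that a nondegenerate square distance matrix with $|\sigma|>1$ satisfies $a_\sigma > 0$. With $a_\sigma$ strictly positive, $\lambda_i = y_i/a_\sigma > 0$ holds for all $i$ exactly when $y_i = (A_\sigma^{-1}1_\sigma)_i > 0$ for all $i$, i.e.\ exactly when $A_\sigma^{-1}1_\sigma > 0$. Combining the two equivalences yields $p_\sigma \in (\conv{q_i}_{i\in\sigma})^o \iff A_\sigma^{-1}1_\sigma > 0$, as claimed. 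The independence from the choice of representing configuration follows because any two representing configurations of $A_\sigma$ differ by an isometry of $\RR^{|\sigma|-1}$, which preserves both the equidistant point and the convex hull together with its interior, so the equivalence holds for one configuration if and only if it holds for all.

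The argument is short, and the only step requiring real care is the positivity of $a_\sigma$: this is precisely what rules out the spurious scenario in which $p_\sigma$ is interior while $A_\sigma^{-1}1_\sigma < 0$, which could otherwise occur if $a_\sigma$ were allowed to be negative (since then all $\lambda_i > 0$ would force all $y_i < 0$). The nondegeneracy hypothesis supplies both the uniqueness of barycentric coordinates needed for the convex-geometry step and, through Corollary~8 of \cite{net-encoding}, the sign control on $a_\sigma$.
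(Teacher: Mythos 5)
Your proof is correct and follows essentially the same route as the paper's: both invoke Lemma~\ref{lemma:qcm} to identify $p_\sigma$ with the center of mass under the masses $y_\sigma = A_\sigma^{-1}1_\sigma$ and then read off interiority from positivity of the (barycentric) coefficients. If anything, you are more careful than the paper, which asserts the equivalence ``$q_{\cm}$ interior iff all masses strictly positive'' without explicitly noting that this direction requires $a_\sigma = m(y) > 0$ to rule out the all-negative-mass case; your appeal to Corollary~8 of \cite{net-encoding} closes that small gap.
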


\begin{proof}
Let $\{q_i\}_{i\in\sigma}$ be a representing point configuration for $A_\sigma$, and assign masses $y_\sigma = A_\sigma^{-1}1_\sigma$, as in Lemma~\ref{lemma:qcm}, so that $q_\cm = p_\sigma.$  Next, observe that
$q_\cm \in \conv{q_i}_{i\in\sigma}$ if and only if the masses are all nonnegative.  Similarly, $q_\cm \in (\conv{q_i}_{i\in\sigma})^o$ if and only if the masses are all strictly positive.  It thus follows that
$p_\sigma \in (\conv{q_i}_{i\in\sigma})^o$ if and only if $A_\sigma^{-1}1_\sigma>0$.
\end{proof}

\begin{lemma}\label{lemma:qk}
Let $A_\tau$ be a nondegenerate square distance matrix with representing point configuration $\{q_i\}_{i\in\tau}$.  Suppose $\sigma \subsetneq \tau$, with $|\sigma|>1$, and let $p_\sigma$ denote the unique equidistant point in the affine subspace defined by the points $\{q_i\}_{i\in\sigma}$.  Then for any $k \in \tau \setminus \sigma$, we have $\sum_{i,j\in\sigma} A_{ki}(A_\sigma^{-1})_{ij} \leq 1$ if and only
if $\Vert q_k - p_\sigma \Vert \leq \Vert q_i - p_\sigma \Vert,$ for $i \in \sigma$.
\end{lemma}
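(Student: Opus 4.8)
The plan is to recognize the algebraic quantity $\sum_{i,j\in\sigma} A_{ki}(A_\sigma^{-1})_{ij}$ as a moment of inertia, and then to use the parallel axis theorem (Proposition~\ref{prop:parallel-axis}) to convert the inequality into a comparison of distances to $p_\sigma$. First I would note that since $A_\tau$ is a nondegenerate square distance matrix and $\sigma \subsetneq \tau$, the sub-configuration $\{q_i\}_{i\in\sigma}$ is affinely independent, so $A_\sigma$ is itself a nondegenerate square distance matrix; in particular $A_\sigma$ is invertible and $a_\sigma > 0$ by \cite[Corollary 8]{net-encoding}. This lets me assign the masses $y_\sigma = A_\sigma^{-1}1_\sigma$ to the points $\{q_i\}_{i\in\sigma}$ exactly as in Lemma~\ref{lemma:qcm}, so that $m(y) = a_\sigma$ and $q_{\cm} = p_\sigma$.

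The crucial observation is that, using $A_{ki} = \Vert q_k - q_i\Vert^2$ and $y_i = \sum_{j\in\sigma}(A_\sigma^{-1})_{ij}$,
$$\sum_{i,j\in\sigma} A_{ki}(A_\sigma^{-1})_{ij} = \sum_{i\in\sigma}\Vert q_k - q_i\Vert^2\, y_i = I_{q_k}(y),$$
that is, the sum is exactly the moment of inertia of the mass configuration about the external point $q_k$. The computation in the proof of Lemma~\ref{lemma:qcm} furthermore gives $I_{q_i}(y) = (A_\sigma A_\sigma^{-1}1_\sigma)_i = 1$ for every $i\in\sigma$, and since $p_\sigma$ is equidistant from the $\{q_i\}_{i\in\sigma}$, the quantity $\Vert q_i - p_\sigma\Vert$ is the same for all $i\in\sigma$, so the choice of $i$ on the right-hand side is immaterial.

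Next I would apply Proposition~\ref{prop:parallel-axis} at the two points $q_k$ and a fixed $q_i$ ($i\in\sigma$), both computed with center of mass $q_{\cm} = p_\sigma$:
$$I_{q_k}(y) = a_\sigma\Vert q_k - p_\sigma\Vert^2 + I_{p_\sigma}(y), \qquad 1 = I_{q_i}(y) = a_\sigma\Vert q_i - p_\sigma\Vert^2 + I_{p_\sigma}(y).$$
Subtracting the second identity from the first cancels the common term $I_{p_\sigma}(y)$ and yields $I_{q_k}(y) - 1 = a_\sigma\bigl(\Vert q_k - p_\sigma\Vert^2 - \Vert q_i - p_\sigma\Vert^2\bigr)$. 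Because $a_\sigma > 0$, the sign of $I_{q_k}(y) - 1$ agrees with that of $\Vert q_k - p_\sigma\Vert^2 - \Vert q_i - p_\sigma\Vert^2$, and since squaring is monotone on the nonnegative reals this produces the claimed equivalence $\sum_{i,j\in\sigma}A_{ki}(A_\sigma^{-1})_{ij} = I_{q_k}(y) \le 1 \iff \Vert q_k - p_\sigma\Vert \le \Vert q_i - p_\sigma\Vert$.

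The one point requiring care — and the step I would flag as the main obstacle — is justifying that the parallel axis theorem may be invoked at $q_k$, even though $q_k$ need not lie in the affine subspace spanned by $\{q_i\}_{i\in\sigma}$. This is exactly why Proposition~\ref{prop:parallel-axis} was stated for arbitrary dimension $d$: here I take $d = |\tau|-1$, regard $q_k$, $p_\sigma$, and all the $q_i$ as points of $\RR^{|\tau|-1}$, and use that the center of mass of the masses $y_i$ supported on $\{q_i\}_{i\in\sigma}$ is an intrinsic affine quantity, hence still equal to $p_\sigma$ regardless of the ambient embedding.
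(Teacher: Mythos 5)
Your proposal is correct and follows essentially the same route as the paper's proof: identify the sum as the moment of inertia $I_{q_k}(y)$ for the masses $y_\sigma = A_\sigma^{-1}1_\sigma$, recall $I_{q_i}(y)=1$ for $i\in\sigma$ and $q_{\cm}=p_\sigma$ from Lemma~\ref{lemma:qcm}, and apply the parallel axis theorem. Your version is, if anything, slightly more explicit than the paper's in spelling out the subtraction step and the role of $a_\sigma>0$ (and of the ambient dimension), which the paper leaves implicit.
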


\begin{proof}
First, assign masses $y_\sigma = A_\sigma^{-1}1_\sigma$ to the subset of points $\{q_i\}_{i\in\sigma}$, and observe that
$$\sum_{i,j\in\sigma} A_{ki}(A_\sigma^{-1})_{ij} = \sum_{i\in\sigma} A_{ki}(A_\sigma^{-1}1_\sigma)_i
= \sum_{i\in\sigma} \Vert q_k-q_i\Vert^2 y_i = I_{q_k}(y).$$
Next, recall from the proof of Lemma~\ref{lemma:qcm} that for each $i \in \sigma$, $I_{q_i}(y) = 1.$ 
Using Proposition~\ref{prop:parallel-axis}, we see that $I_{q_k}(y) \leq 1$ if and only if
$\Vert q_k - q_{\cm} \Vert \leq \Vert q_i - q_{\cm} \Vert$.  On the other hand, Lemma~\ref{lemma:qcm} tells us that $q_{\cm} = p_\sigma$, yielding the desired result.
\end{proof}

By definition of $p_\sigma$, the distances $\Vert q_i - p_\sigma \Vert$ for $i \in \sigma$ in the statement
of Lemma~\ref{lemma:qk} are all equal.  This distance to the equidistant point, $\rho_\sigma = \Vert q_i - p_\sigma \Vert$,
has in fact a well-known formula in terms of the matrix $A_\sigma$ (see \cite[Appendix C]{net-encoding}).
\begin{equation}\label{eqn:rho_sigma}
\rho_\sigma^2 = -\dfrac{1}{2}\dfrac{\det(A_\sigma)}{\cm(A_\sigma)} = \dfrac{1}{2a_\sigma}.
\end{equation}
To state the next lemma, we define
$$B_{\rho_\sigma}(p_\sigma) \od \{q \mid \Vert q - p_\sigma \Vert \leq \rho_\sigma\},$$
which is the closed ball of radius $\rho_\sigma$ centered at $p_\sigma$.  

\begin{figure}[h!]  
\begin{center}
\hspace{-.2in}
\includegraphics[width=5.25in]{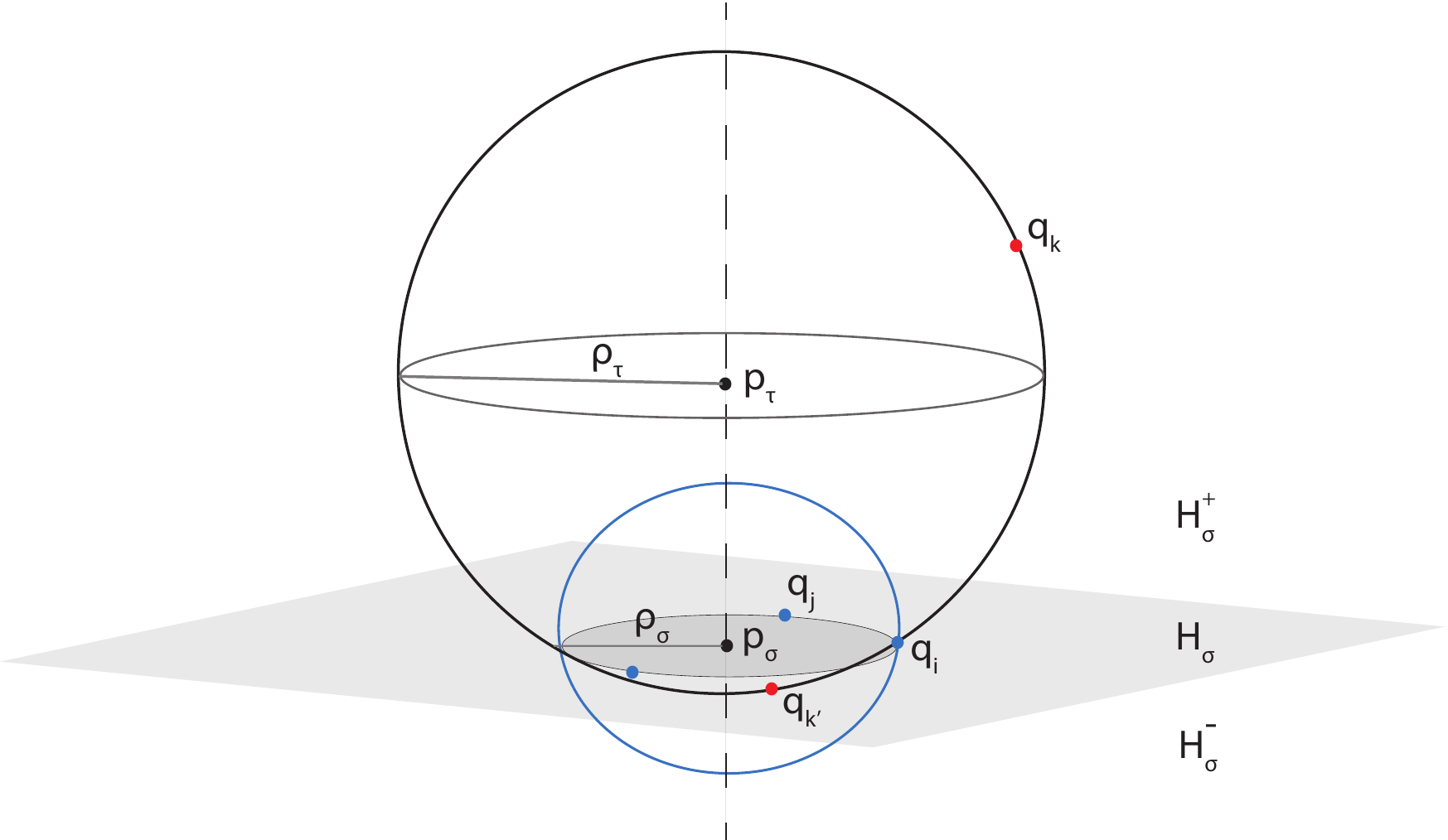}
\end{center}
\caption{Picture for the proof of Lemma~\ref{lemma:hierarchy}.}
\label{fig:spheres}
\end{figure}

\begin{lemma}\label{lemma:hierarchy}
Let $A_\tau$ be a nondegenerate square distance matrix with representing point configuration $\{q_i\}_{i\in\tau}$.  If $p_\tau \in (\conv{q_i}_{i\in\tau})^o$, and $\sigma \subsetneq \tau$ with $|\sigma|>1$, 
then there exists $k \in \tau \setminus \sigma$ such that $q_k \notin B_{\rho_\sigma}(p_\sigma).$
\end{lemma}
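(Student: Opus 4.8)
The plan is to discard the matrix language entirely and argue in a representing point configuration $\{q_i\}_{i\in\tau}\subset\RR^{|\tau|-1}$ for the nondegenerate square distance matrix $A_\tau$, so that the $q_i$ are affinely independent and the claim becomes purely Euclidean: I must exhibit a vertex $q_k$ with $k\in\tau\setminus\sigma$ and $\|q_k-p_\sigma\|>\rho_\sigma$. The key observation is that $p_\tau$, being equidistant (at distance $\rho_\tau:=\|q_i-p_\tau\|$) from all of $\{q_i\}_{i\in\tau}$, is in particular equidistant from the subfamily $\{q_i\}_{i\in\sigma}$, and hence lies on the affine subspace through $p_\sigma$ orthogonal to $\operatorname{aff}\{q_i\}_{i\in\sigma}$. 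Writing $v:=p_\tau-p_\sigma$, this orthogonality means $v$ is perpendicular to the direction space of $\operatorname{aff}\{q_i\}_{i\in\sigma}$, so splitting $q_i-p_\tau$ for $i\in\sigma$ into components along and across that subspace gives the Pythagorean identity $\rho_\tau^2=\rho_\sigma^2+\|v\|^2$.

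Next I would show $p_\tau\neq p_\sigma$, which makes the identity strict, $\rho_\tau>\rho_\sigma$. This is where the interiority hypothesis enters: since $\{q_i\}_{i\in\tau}$ is affinely independent, $\operatorname{aff}\{q_i\}_{i\in\sigma}$ meets the simplex $\conv{q_i}_{i\in\tau}$ only in the proper face $\conv{q_i}_{i\in\sigma}$ (by uniqueness of barycentric coordinates), which lies in the boundary; as $p_\sigma$ sits in that affine hull, it cannot equal the interior point $p_\tau$, so $v\neq0$. I would then exploit interiority a second time by writing $p_\tau=\sum_{i\in\tau}\lambda_i q_i$ with all $\lambda_i>0$ and $\sum_i\lambda_i=1$. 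Since $\sum_{i\in\tau}\lambda_i\langle q_i-p_\tau,v\rangle=\langle p_\tau-p_\tau,v\rangle=0$, at least one vertex $q_k$ satisfies $\langle q_k-p_\tau,v\rangle\ge0$, and for that vertex the expansion $\|q_k-p_\sigma\|^2=\|q_k-p_\tau\|^2+2\langle q_k-p_\tau,v\rangle+\|v\|^2\ge\rho_\tau^2+\|v\|^2>\rho_\sigma^2$ yields $q_k\notin B_{\rho_\sigma}(p_\sigma)$. Because every $i\in\sigma$ has $\|q_i-p_\sigma\|=\rho_\sigma$, this $k$ must lie in $\tau\setminus\sigma$, which is exactly the assertion.

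I expect the main obstacle to be the averaging step that extracts a vertex strictly outside the ball from the bare fact that $p_\tau$ is a strictly positive convex combination of the vertices. The barycentric identity only guarantees a single nonnegative inner product $\langle q_k-p_\tau,v\rangle\ge0$, and one must verify carefully that the vertex so produced genuinely belongs to $\tau\setminus\sigma$ rather than to $\sigma$; this is precisely where the strict inequality $\rho_\tau>\rho_\sigma$, and therefore the claim $p_\tau\neq p_\sigma$, becomes indispensable. A secondary subtlety, worth stating cleanly rather than waving at, is the simplex fact that $\operatorname{aff}\{q_i\}_{i\in\sigma}$ intersects $\conv{q_i}_{i\in\tau}$ only in the face $\conv{q_i}_{i\in\sigma}$, since this is what converts the topological hypothesis ``$p_\tau$ is interior'' into the usable algebraic statement $p_\tau\notin\operatorname{aff}\{q_i\}_{i\in\sigma}$.
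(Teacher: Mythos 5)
Your proof is correct and takes essentially the same route as the paper's: both arguments rest on the orthogonality of $p_\tau-p_\sigma$ to the affine hull of $\{q_i\}_{i\in\sigma}$, combined with interiority of $p_\tau$ forcing some vertex $q_k$ with $k\in\tau\setminus\sigma$ onto the far side of the hyperplane through $p_\sigma$ perpendicular to that segment. Your write-up is in fact somewhat more complete than the paper's, since you explicitly verify $p_\tau\neq p_\sigma$ (which the paper needs for the line $\overline{p_\sigma p_\tau}$ to be well defined) and you replace the figure-based claim about the sphere--ball intersection with the Pythagorean identity $\rho_\tau^2=\rho_\sigma^2+\Vert v\Vert^2$ and a direct computation.
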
 

\begin{proof}
Note that since $A_\tau$ is a nondegenerate square distance matrix, so is $A_\sigma$.  Let $p_\tau$ and $p_\sigma$ be the corresponding unique equidistant points, both embedded in $\RR^{|\tau|-1}$ (so $p_\sigma$ lies in the affine subspace spanned by $\{q_i\}_{i\in\sigma}$).  Let $H_\sigma$ denote the hyperplane in $\RR^{|\tau|-1}$ that contains
$p_\sigma$ and is perpendicular to the line $\overline{p_\sigma p_\tau}$ (see Figure~\ref{fig:spheres}).  Note that because $p_\tau$ is also equidistant to all $\{q_i\}_{i \in \sigma}$, the line
$\overline{p_\sigma p_\tau}$ is perpendicular to the affine subspace defined by  $\{q_i\}_{i \in \sigma}$; hence $H_\sigma$ contains all points $\{q_i\}_{i \in \sigma}$.
Denote by $H_\sigma^+$ the closed halfspace that contains 
$p_\tau$, and let $H_\sigma^-$ denote the opposite halfspace, so that $H_\sigma^+ \cap H_\sigma^- = H_\sigma$.   Because $p_\tau \in H_\sigma^+$ and $p_\tau \in (\conv{q_i}_{i \in \tau})^o$, we must have at least one $k \in \tau\setminus\sigma$ with $q_k \in H_\sigma^+\setminus H_\sigma$.  This implies $q_k \notin B_{\rho_\sigma}(p_\sigma)$, since the only way a point can lie both on the large sphere of radius $\rho_\tau$ and in the ball $B_{\rho_\sigma}(p_\sigma)$ is if it lies in $H_\sigma^-$ (see $q_{k'}$ in Figure~\ref{fig:spheres}).
\end{proof}

Finally, we are ready to prove Proposition~\ref{prop:antichain}.

\begin{proof}[Proof of  Proposition~\ref{prop:antichain}]
Suppose $A_\tau$ is a nondegenerate square distance matrix, satisfying $A_\tau^{-1} 1_\tau>0$. Let 
$\{q_i\}_{i \in \tau}$ be a representing point configuration.  
By Lemma~\ref{lemma:convex-hull},  we know that $p_\tau \in (\conv{q_i}_{i \in \tau})^o$.  Now applying
Lemma~\ref{lemma:hierarchy}, we find there exists $k \in \tau \setminus \sigma$ such that 
$q_k \notin B_{\rho_\sigma}(p_\sigma).$  This means $\Vert q_k - p_\sigma \Vert > \Vert q_i - p_\sigma \Vert$ for $i \in \sigma$, which in turn implies that $\sum_{i,j\in\sigma} A_{ki}(A_\sigma^{-1})_{ij} > 1$, by Lemma~\ref{lemma:qk}.
\end{proof}

\section{Proof of Theorems~\ref{Thm2} and~\ref{thm:symmetric-CTLN}}\label{sec:main-proofs}

In this section we prove Theorems~\ref{Thm2} and~\ref{thm:symmetric-CTLN}.  Following these proofs, we state and prove a related result, Theorem~\ref{Thm3}.

The proof of Theorem~\ref{Thm2} follows from Propositions~\ref{prop:symm-fixed-points} and~\ref{prop:antichain}, together with a simple lemma (below).
Recall from Section~\ref{sec:prelim-permitted} that for a given connectivity matrix $W$, $\P(W)$ denotes the set of all permitted sets of the corresponding threshold-linear network.  We will use the notation $\P_{\max}(W)$ to denote the {\it maximal} permitted sets, with respect to inclusion.  In other words, if $\sigma \in \P_{\max}(W)$, then $\tau \notin \P(W)$ for any $\tau \supsetneq \sigma.$

\begin{lemma}\label{lemma:sym-singleton}
Consider the threshold-linear network~\eqref{eq:network}, for $W$ symmetric with zero diagonal.  If there exists a stable fixed point $x^*$ with support $\sigma = \{i\}$, then $\sigma \in \P_{\max}(W)$.
\end{lemma}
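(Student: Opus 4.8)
The plan is to argue by contradiction, reducing the question of maximality all the way down to pairs, and then using the geometric characterization of permitted sets to force a sign condition on a single off-diagonal entry of $A$. The starting observation is that the existence of a stable fixed point with singleton support $\sigma = \{i\}$ carries a strong off-neuron constraint. Since $A_{ii} = 0$ (zero diagonal), Proposition~\ref{prop:singleton} — equivalently condition (iii) of Proposition~\ref{prop:symm-fixed-points} in the singleton case — tells us that
$$A_{ki} \leq A_{ii} = 0 \quad \text{for all } k \neq i.$$
This is the only piece of fixed-point information I expect to need.

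Next I would suppose, for contradiction, that $\{i\} \notin \P_{\max}(W)$, so there exists a permitted set $\tau \in \P(W)$ with $\tau \supsetneq \{i\}$; choose any $k \in \tau \setminus \{i\}$. Here I invoke the downward-closure property of permitted sets for symmetric $W$ recalled in Section~\ref{sec:prelim-permitted}: since $\sigma \in \P(W)$ implies every subset of $\sigma$ is permitted, the pair $\{i,k\} \subseteq \tau$ must itself satisfy $\{i,k\} \in \P(W)$. Thus it suffices to rule out any permitted pair containing $i$.

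I would then apply the geometric characterization, Lemma~\ref{lemma:geometric-char}: $\{i,k\} \in \P(W)$ requires $A_{\{i,k\}}$ to be a nondegenerate square distance matrix. Because $A$ has zero diagonal and is symmetric, $A_{\{i,k\}} = \begin{pmatrix} 0 & A_{ik} \\ A_{ik} & 0 \end{pmatrix}$, and its representing configuration consists of two points $q_i, q_k$ with $\Vert q_i - q_k \Vert^2 = A_{ik}$. Nondegeneracy of a $2$-point configuration means the two points are distinct (they span a full-dimensional $1$-simplex, i.e.\ a segment of positive length), which forces $A_{ik} > 0$. But $A_{ik} = A_{ki} \leq 0$ by the singleton off-neuron condition above, a contradiction. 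Hence no such $\tau$ exists and $\{i\} \in \P_{\max}(W)$.

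The main obstacle — or rather the key conceptual step — is the reduction to pairs via downward closure, which converts a statement about arbitrary supersets $\tau$ into a statement about $2 \times 2$ principal submatrices, where the nondegenerate-square-distance condition transparently forces strict positivity of the relevant off-diagonal entry. Everything else is a short sign comparison. (One may note in passing that the full pair condition $\{i,k\} \in \P(W)$ is $0 < A_{ik} < 2$, since $a_{\{i,k\}} = 2/A_{ik}$ and $a_\sigma > 1$ is required; but only the nondegeneracy half, $A_{ik} > 0$, is needed for the contradiction.)
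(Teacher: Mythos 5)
Your proof is correct and follows essentially the same route as the paper's: both argue by contradiction that a permitted superset $\tau \supsetneq \{i\}$ forces $A_{ki} > 0$ for some $k$, contradicting the off-neuron condition $A_{ki} \leq A_{ii} = 0$ from the singleton fixed point. The only cosmetic difference is that the paper reads off $A_{ki} > 0$ directly from $A_\tau$ being a nondegenerate square distance matrix, whereas you first pass to the pair $\{i,k\}$ via downward closure; the two are interchangeable.
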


\begin{proof}
Suppose $\sigma = \{i\}$ is not a maximal permitted set, so that $\sigma \subsetneq \tau$ for some $\tau \in \P(W)$.  Since $A_\tau$ must be a nondegenerate square distance matrix, and $|\tau| \geq 2$, it follows that $A_{ki}>0$ for all $k \in \tau \setminus \{i\}$.  But this contradicts condition (iii) of Proposition~\ref{prop:symm-fixed-points}, so we can conclude that $\sigma \in \P_{\max}(W).$
\end{proof}

Recall that Proposition~\ref{prop:antichain} implies that if $\tau$ satisfies conditions (i) and (ii) of Proposition~\ref{prop:symm-fixed-points}, then no $\sigma \subsetneq \tau$ with $|\sigma|>1$ can possibly satisfy condition (iii).  
Lemma~\ref{lemma:sym-singleton} allows us to extend this observation to the case of $|\sigma|=1$, as singletons can only support stable fixed points if they are maximal permitted sets.  Proposition~\ref{prop:symm-fixed-points} also tells us that the existence of a stable fixed point with nonempty support $\tau$ implies $\theta>0$, ruling out the possibility of the stable fixed point with empty support, $x^*=0$.  Thus, the above consequence of Proposition~\ref{prop:antichain} can be extended to all $\sigma \subsetneq \tau$.  We are now ready to prove Theorem~\ref{Thm2}

\begin{proof}[Proof of Theorem~\ref{Thm2}]
Suppose there exists a stable fixed point with support $\tau$.  We will first show
that for $\sigma \subsetneq \tau$, $\sigma$ cannot support a stable fixed point.  We may assume $\tau$ is nonempty (otherwise it has no proper subsets).  It follows from Proposition~\ref{prop:symm-fixed-points} that $\theta>0$.
We consider three cases: $\sigma = \emptyset$, $|\sigma| = 1$, and $|\sigma| \geq 2$.

Suppose $\sigma = \emptyset$.  Since $\theta>0$, we know that $\sigma$ cannot support a stable fixed point, by Proposition~\ref{prop:singleton}.  Next, suppose $|\sigma|=1$.  Here we can also conclude that 
$\sigma$ cannot support a stable fixed point, by Lemma~\ref{lemma:sym-singleton}.  (Note that $\sigma\notin \P_{\max}(W)$ because $\sigma \subsetneq \tau$, and we must have $\tau \in \P(W)$.)  Finally, suppose $|\sigma| \geq 2$.  Since both $|\sigma|,|\tau|>1$, we can apply Proposition~\ref{prop:antichain}.  All hypotheses are satisfied because $\tau$ is the nonempty support of a stable fixed point, and so by Proposition~\ref{prop:symm-fixed-points} we know that $A_\tau$ is a nondegenerate square distance matrix with $A_\tau^{-1} 1_\tau > 0$.  It thus follows from Proposition~\ref{prop:antichain}
that $\sigma$ cannot satisfy condition (iii) of Proposition~\ref{prop:symm-fixed-points}, so $\sigma$ cannot be a stable fixed point support.

Finally, observe that for any proper superset $\sigma \supsetneq \tau$, if $\sigma$ is the support of a stable fixed point, then the above logic implies that $\tau$ cannot support a stable fixed point, contradicting the hypothesis.  Thus, there is no stable fixed point with support $\sigma$ for any $\sigma \subsetneq \tau$ or $\sigma \supsetneq \tau$.  
\end{proof}

We now turn to the proof of Theorem~\ref{thm:symmetric-CTLN}.  We will use the following simple lemma.

\begin{lemma}\label{lemma:A_sigma}
Let $G$ be a simple graph, and consider the binary symmetric network $W=W(G, \varepsilon, \delta)$, for some $0 < \varepsilon<1$ and $\delta >0$, as in~\eqref{eq:binary-synapse}.  Let $A=11^T-I+W$, as in~\eqref{eq:A}.  If $\sigma$ is a clique of $G$ with $|\sigma|>1$, then 
\begin{enumerate}
\item[(i)] $-11^T+A_\sigma$ is stable, and 
\item[(ii)] $A_{\sigma}^{-1} 1_\sigma = \dfrac{1}{\varepsilon(|\sigma| -1)}1_\sigma$ (and thus $a_\sigma =  \dfrac{|\sigma|}{\varepsilon(|\sigma| -1)}$).
\end{enumerate}
\end{lemma}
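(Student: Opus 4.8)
The plan is to first compute the principal submatrix $A_\sigma$ explicitly for a clique $\sigma$, and then read off both claims from the eigenstructure of the all-ones matrix. Writing $A = 11^T - I + W$ with $W = W(G,\varepsilon,\delta)$, the definition~\eqref{eq:binary-synapse} gives $A_{ii} = 0$ on the diagonal, $A_{ij} = \varepsilon$ whenever $(ij) \in G$, and $A_{ij} = -\delta$ whenever $(ij) \notin G$. Since every pair of distinct vertices in a clique $\sigma$ is joined by an edge, all off-diagonal entries of $A_\sigma$ equal $\varepsilon$; that is, $A_\sigma = \varepsilon(J - I)$, where $J = 1_\sigma 1_\sigma^T$ is the $|\sigma| \times |\sigma|$ all-ones matrix and $I$ is the identity of the same size.

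For (ii), I would exploit the fact that $1_\sigma$ is an eigenvector of $J$, hence of $A_\sigma$. Directly, $A_\sigma 1_\sigma = \varepsilon(J-I)1_\sigma = \varepsilon(|\sigma| - 1)1_\sigma$, so $1_\sigma$ is an eigenvector of $A_\sigma$ with eigenvalue $\varepsilon(|\sigma|-1)$, which is nonzero since $\varepsilon > 0$ and $|\sigma| > 1$. Inverting gives $A_\sigma^{-1} 1_\sigma = \dfrac{1}{\varepsilon(|\sigma|-1)}1_\sigma$, and then summing its entries as in~\eqref{eq:a_sigma} (equivalently, forming $1_\sigma^T A_\sigma^{-1} 1_\sigma$) yields $a_\sigma = \dfrac{|\sigma|}{\varepsilon(|\sigma|-1)}$, exactly as claimed.

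For (i), I would compute the spectrum of $M = -11^T + A_\sigma = (\varepsilon - 1)J - \varepsilon I$ directly. Since $M$ is symmetric, its eigenvalues are real, so stability amounts to showing they are all negative. The matrix $J$ has eigenvalue $|\sigma|$ on the span of $1_\sigma$ and eigenvalue $0$ on its orthogonal complement; consequently $M$ has the single eigenvalue $(\varepsilon - 1)|\sigma| - \varepsilon = \varepsilon(|\sigma| - 1) - |\sigma|$ on $1_\sigma$ and the eigenvalue $-\varepsilon$ with multiplicity $|\sigma| - 1$ on the complement. Both are negative: $-\varepsilon < 0$ because $\varepsilon > 0$, and $\varepsilon(|\sigma|-1) - |\sigma| < (|\sigma| - 1) - |\sigma| = -1 < 0$ because $\varepsilon < 1$. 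Hence $M$ is stable.

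There is no serious obstacle here; the whole lemma reduces to recognizing the constant off-diagonal structure of $A_\sigma$ on a clique and reading off the eigendata of $J$. The only real choice is how to establish (i): the direct spectral computation above is self-contained, but one could instead invoke Lemma~\ref{lemma:geometric-char}, noting that $A_\sigma = \varepsilon(J-I)$ is a scaling of the square distance matrix of a regular simplex, hence nondegenerate square distance, and that the value $a_\sigma = \dfrac{|\sigma|}{\varepsilon(|\sigma|-1)}$ found in (ii) satisfies $a_\sigma > 1$ precisely because $\varepsilon(|\sigma|-1) < |\sigma|$. I would favor the direct eigenvalue argument, since it avoids appealing to the geometric characterization and treats (i) and (ii) on the same footing.
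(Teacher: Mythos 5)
Your proposal is correct and follows essentially the same route as the paper: both identify $A_\sigma = \varepsilon(11^T - I)_\sigma$ on a clique and read off the spectrum of $-11^T + A_\sigma = (-1+\varepsilon)11^T - \varepsilon I$ for part (i). The only cosmetic difference is in (ii), where the paper writes down the full explicit inverse of $A_\sigma$ while you more economically note that $1_\sigma$ is an eigenvector with nonzero eigenvalue $\varepsilon(|\sigma|-1)$ (and the spectrum you computed shows $A_\sigma$ is indeed invertible); both yield the same conclusion.
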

\begin{proof}  We first prove (i), and then (ii).

(i) Observe that if $\sigma$ is a clique and $|\sigma|>1$, then $-11^T+A_{\sigma}$ is a matrix with all diagonal entries equal to $-1$ and all off-diagonal entries equal to $-1+\varepsilon$.  Thus, $-11^T+A_{\sigma} = (-1+\varepsilon)11^T - \varepsilon I_\sigma$.  The term $(-1+\varepsilon)11^T$ is a rank one matrix, whose only non-zero eigenvalue is $|\sigma| (-1+\varepsilon)$, corresponding to the eigenvector $1_\sigma$.  Thus, the eigenvalues of $-11^T+A_{\sigma}$ are $|\sigma|(-1+\varepsilon) -\varepsilon$ and $-\varepsilon$.  The eigenvalue $|\sigma|(-1+\varepsilon) -\varepsilon$ is negative precisely when $\varepsilon < |\sigma|/(|\sigma|-1)$.  This always holds since $|\sigma|>1$ and $\varepsilon<1$.   We conclude that all eigenvalues of $-11^T+A_{\sigma}$ are negative, and hence $-11^T+A_{\sigma}$ is stable.  

(ii) Notice that since $\sigma$ is a clique, $A_\sigma = \varepsilon(11^T-I)_\sigma$.  It is easy to check that $A_\sigma^{-1}$ satisfies,
\[
A_\sigma^{-1} = \dfrac{1}{\varepsilon(|\sigma|-1)} {\scriptsize \begin{bmatrix} -|\sigma|+2 & 1& 1 & \ldots & 1\\ 1 & -|\sigma|+2 & 1 & \ldots & 1\\ 1 & 1& -|\sigma|+2& \ldots & 1 \\ \vdots & \vdots & &\ddots  &\vdots \\ 1&1&1&\ldots &-|\sigma|+2 \end{bmatrix} },
\]
from which it immediately follows that $A_{\sigma}^{-1} 1_\sigma = \dfrac{1}{\varepsilon(|\sigma| -1)}1_\sigma$.
\end{proof}

\begin{proof}[Proof of Theorem~\ref{thm:symmetric-CTLN}]
First we show that non-cliques cannot support stable fixed points, because they are not permitted sets.  Then we show that all maximal cliques do support stable fixed points, and derive the corresponding expression for $x_\sigma^*$.  Finally, we invoke Theorem~\ref{Thm2} to conclude that there can be no other stable fixed points.

Let $\sigma \subseteq [n]$ be a subset that is not a clique in $G$. 
In the case $\sigma = \emptyset$, we know from Proposition~\ref{prop:singleton} that there is no $x^*=0$ fixed point because $\theta>0.$   Suppose, then, that $\sigma$ is nonempty.  Since single vertices are always cliques, we know that $|\sigma| \geq 2$ and there exists at least one pair $i,j \in \sigma$ such that $(ij) \notin G$.  The corresponding principal submatrix $(-11^T+A_{\sigma})_{\{ij\}}$ is given by $\scriptsize {\begin{bmatrix} -1 & -1-\delta \\ -1-\delta & -1 \end{bmatrix} } $.   Since both the determinant and trace of this submatrix are negative, it must have a positive eigenvalue and is thus unstable.  By the Cauchy Interlacing Theorem (see \cite[Appendix A]{net-encoding} or \cite{cauchy}), $-11^T+A_{\sigma}$ must also have a positive eigenvalue, rendering it unstable.  It follows from~\eqref{eq:P(W)} that $\sigma$ is not a permitted set, so there is no stable fixed point with support $\sigma$.  

Next, we show that all maximal cliques support stable fixed points.  We split into two cases: $|\sigma|=1$ and $|\sigma|>1$.
Let  $\sigma=\{i\}$ be a maximal clique consisting of a single vertex, $i$.  Since $A_{ii}=0 <1$ and $\theta>0$, conditions (i) and (ii) of Proposition~\ref{prop:singleton} are always satisfied.  Since $\{i\}$ is a maximal clique, $(ik) \notin G$ for all $k\neq i$, and so $A_{ki}=-\delta < 0 = A_{ii}$ for all $k\neq i$.  Thus, condition (iii) of Proposition~\ref{prop:singleton} is also satisfied.  It follows that $\sigma = \{i\}$ supports a stable fixed point, given by $x_i^* = \theta$ and $x_k^* = 0$ for all $k \neq i$, in agreement with the desired formula for $x_\sigma^*$ with $|\sigma|=1$.

Now suppose $\sigma$ is a maximal clique with $|\sigma|>1$.  Note that $A_\sigma$ is not fine-tuned, so Theorem~\ref{Thm1} applies.  By  Lemma~\ref{lemma:A_sigma}, $-11^T+A_\sigma$ is stable and $A_{\sigma}^{-1} 1_\sigma = \dfrac{1}{\varepsilon(|\sigma| -1)}1_\sigma >0$.  Thus, conditions (i) and (ii) of Theorem~\ref{Thm1} are satisfied, and $\theta>0$ by assumption.  It remains only to show that condition (iii) holds.
Since $\sigma$ is a maximal clique, for all $k \notin \sigma$ there exists some $i_k \in \sigma$ such that $(i_kk) \notin G$, and thus $A_{k i_k}=-\delta$.  We obtain:
\[
\sum_{i,j \in \sigma} A_{ki} (A_{\sigma}^{-1})_{ij} = \sum_{i \in \sigma} A_{ki}\sum_{j \in \sigma}(A_{\sigma}^{-1})_{ij} =
\dfrac{1}{\varepsilon(|\sigma|-1)}\sum_{i \in \sigma} A_{ki} 
= \dfrac{-\delta + \sum_{i \in \sigma \setminus \{i_k\}} A_{ki}}{\varepsilon(|\sigma|-1)},
\]
where we have used part (ii) of Lemma~\ref{lemma:A_sigma} to evaluate the row sums of $A_{\sigma}^{-1}$.  Since $A_{ki} \leq \varepsilon$ for each $i \in \sigma$, it follows that $\sum_{i \in \sigma \setminus \{i_k\}} A_{ki} \leq \varepsilon(|\sigma|-1)$, and so condition (iii) is satisfied:
$$\sum_{i,j \in \sigma} A_{ki} (A_{\sigma}^{-1})_{ij} \leq  \dfrac{-\delta}{\varepsilon(|\sigma|-1)} + 1 < 1.$$
We conclude that there exists a stable fixed point for each maximal clique $\sigma$.  Using the formula for $x_\sigma^*$ from Theorem~\ref{thm:fixed-pts}, together with the expressions for $A_\sigma^{-1} 1_\sigma$ and $a_\sigma$ from Lemma~\ref{lemma:A_sigma}, we obtain the desired equation for the fixed point:
$$x_\sigma^* = \dfrac{\theta}{a_\sigma-1} A_\sigma^{-1} 1_\sigma = \dfrac{\theta}{(1-\varepsilon)|\sigma|+\varepsilon}1_\sigma.$$

Finally, observe that since any non-maximal clique is necessarily a proper subset of a maximal clique, Theorem~\ref{Thm2} guarantees that non-maximal cliques can not support stable fixed points.  Thus, the supports of stable fixed points correspond precisely to maximal cliques in $G$.
\end{proof}

In Theorem~\ref{thm:symmetric-CTLN}, we saw that all fixed point supports corresponded to maximal permitted sets, as these were the maximal cliques of the underlying graph $G$.  This situation is ideal for pattern completion, as it guarantees that only maximal patterns can be returned as outputs of the network.
Our final result shows that this phenomenon generalizes to a broader class of symmetric networks, provided all maximal permitted sets $\tau \in \P_{\max}(W)$ satisfy the on-neuron conditions, $A_\tau^{-1} 1_\tau>0,$ from Theorem~\ref{Thm1}.

\begin{theorem}\label{Thm3}
Let $W$ be symmetric with zero diagonal, and suppose that $\theta>0$ and $A_\tau^{-1} 1_\tau>0$ for each $\tau \in \P_{\max}(W)$ with $|\tau|>1$.  If $x^*$ is a stable fixed point of~\eqref{eq:network}, then $\supp(x^*)\in \P_{\max}(W)$.
\end{theorem}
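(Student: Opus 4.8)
The plan is to argue by contradiction, following closely the structure of the proof of Theorem~\ref{Thm2}, with the standing hypothesis on maximal permitted sets supplying the one ingredient that was previously automatic. Let $\sigma = \supp(x^*)$. Since $x^*$ is a stable fixed point, $\sigma$ is a permitted set, so $\sigma \in \P(W)$. Because $\theta > 0$, the empty support cannot carry a stable fixed point (Proposition~\ref{prop:singleton}), so $\sigma$ is nonempty. Suppose, for contradiction, that $\sigma \notin \P_{\max}(W)$. Then there is a maximal permitted set $\tau$ with $\sigma \subsetneq \tau$, and since $\sigma$ is nonempty we have $|\tau| \geq 2$.

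The crucial step is to check that $\tau$ satisfies the hypotheses of Proposition~\ref{prop:antichain}. Since $\tau \in \P(W)$, Lemma~\ref{lemma:geometric-char} tells us $A_\tau$ is a nondegenerate square distance matrix. Moreover, $\tau \in \P_{\max}(W)$ with $|\tau| > 1$, so the hypothesis of the theorem gives $A_\tau^{-1} 1_\tau > 0$. This is exactly where the new assumption does its work: in Theorem~\ref{Thm2} the positivity $A_\tau^{-1} 1_\tau > 0$ came for free from condition (ii) of Proposition~\ref{prop:symm-fixed-points} because $\tau$ itself supported a stable fixed point, whereas here $\tau$ need not support one, so the positivity must be assumed directly.

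With $\tau$ verified, I would split on $|\sigma|$. If $|\sigma| = 1$, then Lemma~\ref{lemma:sym-singleton} forces $\sigma \in \P_{\max}(W)$, contradicting $\sigma \subsetneq \tau$. If $|\sigma| > 1$, then both $|\sigma|, |\tau| > 1$ and Proposition~\ref{prop:antichain} applies to the pair $\sigma \subsetneq \tau$, producing some $k \in \tau \setminus \sigma$ with $\sum_{i,j\in\sigma} A_{ki}(A_\sigma^{-1})_{ij} > 1$. But $\sigma$ supports a stable fixed point, so by condition (iii) of Proposition~\ref{prop:symm-fixed-points} we have $\sum_{i,j\in\sigma} A_{ki}(A_\sigma^{-1})_{ij} \leq 1$ for every $k \notin \sigma$, in particular for this $k$ --- a contradiction. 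Either way the assumption that $\sigma$ is non-maximal is untenable, so $\supp(x^*) = \sigma \in \P_{\max}(W)$.

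There is no serious obstacle once the machinery of Sections~\ref{sec:gen-fixed-points} and~\ref{sec:geometry} is in hand; the content of the argument is the recognition that the hypothesis $A_\tau^{-1} 1_\tau > 0$ on maximal permitted sets is precisely calibrated to feed Proposition~\ref{prop:antichain}. The only point requiring care is the bookkeeping around degenerate cases: confirming that $\sigma$ is nonempty (using $\theta > 0$), that the relevant $\tau$ always has $|\tau| > 1$ so the hypothesis applies, and that the singleton case $|\sigma| = 1$ is handled separately by Lemma~\ref{lemma:sym-singleton} rather than by Proposition~\ref{prop:antichain}.
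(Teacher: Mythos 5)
Your proof is correct and follows essentially the same route as the paper's: nonemptiness from $\theta>0$ via Proposition~\ref{prop:singleton}, the singleton case via Lemma~\ref{lemma:sym-singleton}, and the case $|\sigma|>1$ by embedding $\sigma$ in a maximal permitted set $\tau$, feeding the hypothesis $A_\tau^{-1}1_\tau>0$ into Proposition~\ref{prop:antichain}, and contradicting the off-neuron condition. The only difference is the (purely stylistic) contradiction framing.
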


\begin{proof}
Let $\sigma = \supp(x^*)$ be the support of a stable fixed point $x^*$.  Observe that $\sigma \neq \emptyset$, since Proposition~\ref{prop:singleton} states that $x^*=0$ can only be a fixed point when $\theta \leq 0$.  If $\sigma=\{i\}$ for some $i \in [n]$, then by Lemma~\ref{lemma:sym-singleton} we have that $\{i\} \in \P_{\max}(W)$.  

Next, consider $|\sigma|>1$.  Since $\sigma \in \P(W)$, there exists some $\tau \in \P_{\max}(W)$ such that $\sigma \subseteq \tau$.  Since $\tau$ is a permitted set, $A_\tau$ is a nondegenerate square distance matrix (Lemma~\ref{lemma:geometric-char}).  By hypothesis, $A_\tau^{-1} 1_\tau > 0$, so Proposition~\ref{prop:antichain} applies.  If $\sigma\subsetneq \tau$ is a proper subset, then there exists $k \in \tau\setminus \sigma$ that violates condition (iii) of Theorem~\ref{Thm1}, contradicting the assumption that $\sigma$ is a stable fixed point support.  It follows that $\sigma =\tau,$ and thus $\sigma \in \P_{\max}(W)$.  
\end{proof}

\section{Appendix}

\subsection{Additional details for the simulations in Section~\ref{sec:application}}\label{sec:gen-PF-codes}

\paragraph{Network implementation.}
We solved the system of differential equations~\eqref{eq:network}, with $\theta = 1$, using a standard Matlab ode solver for a length of time corresponding to $50\tau_{_\mathrm{ L}}$, where $\tau_{_\mathrm{L}}$ is the leak time constant associated to each neuron.  This length of time was sufficient for the network to numerically stabilize at a fixed point $x^*$.  Note that $\tau_{_\mathrm{L}}$ is omitted from our equations because we have set $\tau_{_\mathrm{L}} = 1,$ so that time is measured in units of $\tau_{_\mathrm{L}}$.

\paragraph{Generation of 2D place field codes.} 
Two-dimensional place field codes were generated following the same methods as in \cite{neuro-coding}.  200 place field centers were randomly chosen from a $1 \times 1$ square box environment, with each place field a disk of radius $0.15$. This produced place field codes with an average of 7\% of neurons firing per codeword.  As described in \cite{neuro-coding}, the place field centers were initially chosen randomly from 
uncovered regions of the stimulus space, until complete coverage was achieved.  The remaining place field centers were then chosen uniformly at random from the full space.  Here we introduced one modification to the procedure in \cite{neuro-coding}: our place field centers were generated 50 at a time, repeating the process from the beginning for the four sets of 50 neurons in order to guarantee that every point in the stimulus space was covered by a minimum of four place fields.  This ensured that all codewords had at least four 1s (out of 200 bits).

\subsection{A generalization of the parallel axis theorem}\label{sec:gen-parallel-axis}

Here we present a new, more general version of the parallel axis theorem (Proposition~\ref{prop:parallel-axis}), which was used in Section~\ref{sec:geometry}.

\begin{proposition}\label{prop:gen-parallel-axis}
Let $A$ be an $n \times n$ matrix, and $y \in \RR^n$ a vector such that $m = m(y)=\sum_{i=1}^n y_i \neq 0$.  Then there exists a unique $\lambda = (\lambda_1,\ldots,\lambda_n)$ such that
$$Ay = \Lambda y, \quad \text{ where } \quad \Lambda_{ij} = \lambda_i + \lambda_j.$$
Explicitly,
\begin{equation}\label{eq:lambda_i}
\lambda_i = \dfrac{(Ay)_i}{m} - \dfrac{ y \cdot (Ay)}{2m^2}.
\end{equation}
\end{proposition}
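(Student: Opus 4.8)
The plan is to exploit the rank-two structure of $\Lambda$. Since $\Lambda_{ij} = \lambda_i + \lambda_j$, I would write $\Lambda = \lambda 1^T + 1\lambda^T$, where $\lambda = (\lambda_1,\ldots,\lambda_n)^T$ is regarded as a column vector and $1$ is the all-ones column vector. Applying this to $y$ and using $1^T y = m$ and $\lambda^T y = \lambda\cdot y$, the target equation $Ay = \Lambda y$ collapses to the vector identity
$$Ay = m\lambda + (\lambda\cdot y)\,1.$$
This reduces the problem to a single linear equation for $\lambda$, the only complication being that the scalar $\lambda\cdot y$ on the right itself depends on the unknown $\lambda$.

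To untangle this self-reference, I would introduce the scalar $c = \lambda\cdot y$ and solve for $\lambda$ in terms of $c$. Since $m\neq 0$, we may divide to obtain
$$\lambda = \dfrac{1}{m}\bigl(Ay - c\,1\bigr).$$
The scalar $c$ is then pinned down by taking the inner product of this expression with $y$, which produces a single scalar equation. Using $1\cdot y = m$, it reads $c = \dfrac{y\cdot(Ay)}{m} - c$, whence $2c = \dfrac{y\cdot(Ay)}{m}$ and $c = \dfrac{y\cdot(Ay)}{2m}$. Substituting back recovers exactly the claimed formula
$$\lambda_i = \dfrac{(Ay)_i}{m} - \dfrac{y\cdot(Ay)}{2m^2}.$$

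For the logical structure, this chain of deductions already delivers uniqueness: any solution must satisfy $c = \lambda\cdot y = \dfrac{y\cdot(Ay)}{2m}$, and is therefore forced to equal the displayed vector. For existence, I would simply verify that this explicit $\lambda$ solves the equation: a direct computation of $\lambda\cdot y$ confirms it equals $\dfrac{y\cdot(Ay)}{2m}$, and substituting into $m\lambda + (\lambda\cdot y)\,1$ returns $Ay$ entrywise. I do not anticipate a genuine obstacle, as the argument is elementary linear algebra; the sole point requiring care is the hypothesis $m\neq 0$, which is used both to divide by $m$ when solving for $\lambda$ and to keep the scalar equation for $c$ well-posed. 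The one conceptual step worth flagging is recognizing the rank-two decomposition $\Lambda = \lambda 1^T + 1\lambda^T$, since it is precisely this observation that makes the whole computation reduce to solving for one vector and one scalar.
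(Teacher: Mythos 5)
Your proposal is correct and follows essentially the same route as the paper's proof: the decomposition $\Lambda = \lambda 1^T + 1\lambda^T$ is just the matrix form of the paper's componentwise identity $(\Lambda y)_i = m\lambda_i + \lambda\cdot y$, and both arguments then dot with $y$ to pin down $\lambda\cdot y = \frac{1}{2m}\,y\cdot(Ay)$ and substitute back. Your explicit separation of uniqueness (forced by the chain of deductions) from existence (direct verification) is a small point of added care that the paper leaves implicit, but it is not a different method.
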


\begin{proof}
First, observe that $Ay = \Lambda y$ implies
$$(Ay)_i = (\Lambda y)_i = \sum_j (\lambda_i + \lambda_j) y_j = m \lambda_i + \lambda \cdot y.$$
Since $m \neq 0$, 
$\lambda_i = \dfrac{(Ay)_i - \lambda \cdot y}{m}.$
Taking the dot product with $y$ yields
$$\lambda \cdot y = \dfrac{1}{m}y \cdot (Ay) - \dfrac{1}{m}(\lambda \cdot y) \sum_i y_i 
= \dfrac{1}{m}y \cdot (Ay) - \lambda \cdot y,$$
allowing us to solve for
$\lambda \cdot y = \dfrac{1}{2m} y \cdot (Ay).$
Plugging this into the above expression for $\lambda_i$ yields the desired result.
\end{proof}

We can now obtain the classical parallel axis theorem as a special case of Proposition~\ref{prop:gen-parallel-axis}.
Consider the case where $A$ is the square distance matrix for a configuration of points $q_1,\ldots q_n \in \RR^{d}$, and $y = (y_1,\ldots,y_n)^T$ is a vector of masses, one for each point, whose sum $m = \sum_{i=1}^n y_i$ is nonzero.  In this case, $A_{ij} = \Vert q_i - q_j \Vert^2$, the center of mass is $q_{\cm} = \dfrac{1}{m}\sum_{i=1}^n q_iy_i$, and
it is not difficult to check that
$$\lambda_i  = \dfrac{(Ay)_i}{m} - \dfrac{ y \cdot (Ay)}{2m^2} = \Vert q_i - q_{\cm} \Vert^2.$$
(Without loss of generality, choose $q_{\cm} = 0$ and use this fact to cancel terms of the form $\sum_i q_i y_i$ that appear when you rewrite $A_{ij} = (q_i - q_j)\cdot(q_i - q_j)$ and expand.)  Recall that the moment of inertia of such a mass configuration about a point $q \in \RR^d$ is given by
$I_q(y) = \sum_{j=1}^n \Vert q - q_j \Vert^2 y_j.$
If $q = q_i$ for some $i \in [n]$, then we have
$$I_{q_i}(y) = \sum_{j=1}^n \Vert q_i - q_j \Vert^2 y_j =(Ay)_i.$$
We can now prove Proposition~\ref{prop:parallel-axis}, which states that:
$$I_{q}(y) = m \Vert q - q_{\cm} \Vert^2 + I_{q_{\cm}}(y).$$

\begin{proof}[Proof of Proposition~\ref{prop:parallel-axis}] 
Without loss of generality, we can assume that $q = q_i$ for some $i \in [n]$.  (If not, add the point $q$ to the collection $\{q_i\}$ and assign it a mass of $0$.)
As observed above, $I_q(y) = I_{q_i}(y) = (Ay)_i,$ and recall
from the proof of Proposition~\ref{prop:gen-parallel-axis} that $(Ay)_i = m\lambda_i + \lambda \cdot y$.   
Since $\lambda_i = \Vert q_i - q_{\cm} \Vert^2$, and thus $\lambda \cdot y = I_{q_{\cm}}(y),$ substituting these expressions into the equation for $(Ay)_i$
immediately yields the desired result.
\end{proof}

To see why Proposition~\ref{prop:gen-parallel-axis} may be useful more generally, consider the situation where $y\in \RR^n$ is proportional to the vector of firing rates at a fixed point of a threshold-linear network~\eqref{eq:network}.
Specifically, suppose $y$ has support $\sigma$, and $y_\sigma = A_\sigma^{-1}1_\sigma$.  In this case, 
$m = \sum_{i \in \sigma}  (A_\sigma^{-1}1_\sigma)_i = a_\sigma$, 
$(Ay)_i = 1$ for all $i \in \sigma$,
and $y \cdot (Ay) =  \sum_{i \in \sigma} y_i = m = a_\sigma$. 
It follows that
$$\lambda_i = \dfrac{1}{2a_\sigma} \;\; \text{ for all } \; \; i \in \sigma.$$
On the other hand, for $k \notin \sigma$ we have 
 $(Ay)_k = \sum_{i \in \sigma} A_{ki} y_i =\sum_{i,j \in \sigma} A_{ki} (A_\sigma^{-1})_{ij},$ so that
$$ \lambda_k = \dfrac{1}{a_\sigma}\left(\sum_{i,j\in\sigma}A_{ki}(A_\sigma^{-1})_{ij} - \dfrac{1}{2}\right) \;\; \text{ for all } \; \; k \notin \sigma.$$
Assuming $a_\sigma>0$, the off-neuron conditions of Theorem~\ref{Thm0} and Theorem~\ref{Thm1} are satisfied if and only if $\lambda_k < \dfrac{1}{2a_\sigma} = \lambda_i$.
Thus, $\lambda_i$ for $i \in \sigma$ generalizes the quantity
$$\Vert q_i - q_{\cm} \Vert^2 = \Vert q_i - p_{\sigma} \Vert^2 = \rho_\sigma^2 = \dfrac{1}{2a_\sigma},$$ 
which appeared in Section~\ref{sec:geometry} for $A$ a nondegenerate square distance matrix (see equation~\eqref{eqn:rho_sigma}).
Similarly, for $k \notin \sigma$, the condition $\lambda_k<\lambda_i$ generalizes the requirement 
$\Vert q_k - p_\sigma \Vert \leq \Vert q_i - p_\sigma \Vert$ from Lemma~\ref{lemma:qk}.  This suggests that it may be possible to generalize Proposition~\ref{prop:antichain}, 
and thus Theorem~\ref{Thm2}, beyond the symmetric case.

\section{Acknowledgments}
CC was supported by NSF DMS-1225666/DMS-1537228, NSF DMS-1516881, and an Alfred P. Sloan Research Fellowship.


\bibliographystyle{unsrt}
\bibliography{network-refs}

\begin{thebibliography}{10}

\bibitem{Tsodyks}
M.~Tsodyks and T.~Sejnowski.
\newblock Associative memory and hippocampal place cells.
\newblock {\em Int. J. Neural Syst.}, 6:81--86, 1995.

\bibitem{TrevesRolls}
M.W. Simmen, A.~Treves, and E.T. Rolls.
\newblock Pattern retrieval in threshold-linear associative nets.
\newblock {\em Network-Comp. Neural}, 7:109--122, 1996.

\bibitem{VogelsRajanAbbott}
T.P. Vogels, K.~Rajan, and L.F. Abbott.
\newblock Neural network dynamics.
\newblock {\em Annu. Rev. Neurosci}, 28:357--376, 2005.

\bibitem{AppendixE}
H.S. Seung and R.~Yuste.
\newblock {\em Principles of Neural Science}, chapter Appendix {E}: Neural
  networks, pages 1581--1600.
\newblock McGraw-Hill Education/Medical, 5 edition, 2012.

\bibitem{HahnSeungSlotine}
R.~H. Hahnloser, H.S. Seung, and J.J. Slotine.
\newblock Permitted and forbidden sets in symmetric threshold-linear networks.
\newblock {\em Neural Comput.}, 15(3):621--638, 2003.

\bibitem{Hopfield1}
J.J. Hopfield.
\newblock Neural networks and physical systems with emergent collective
  computational abilities.
\newblock {\em Proc. Natl. Acad. Sci.}, 79(8):2554--2558, 1982.

\bibitem{Hopfield2}
J.J. Hopfield.
\newblock Neurons with graded response have collective computational properties
  like those of two-sate neurons.
\newblock {\em Proc. Natl. Acad. Sci.}, 81:3088--3092, 1984.

\bibitem{flex-memory}
C.~Curto, A.~Degeratu, and V.~Itskov.
\newblock Flexible memory networks.
\newblock {\em Bull. Math. Biol.}, 74(3):590--614, 2012.

\bibitem{net-encoding}
C.~Curto, A.~Degeratu, and V.~Itskov.
\newblock Encoding binary neural codes in networks of threshold-linear neurons.
\newblock {\em Neural Comput.}, 25:2858--2903, 2013.

\bibitem{sperner}
I.~Anderson.
\newblock {\em Combinatorics of finite sets}.
\newblock Oxford University Press, 1987.

\bibitem{OKeefe}
J.~O'Keefe and J.~Dostrovsky.
\newblock The hippocampus as a spatial map. {P}reliminary evidence from unit
  activity in the freely-moving rat.
\newblock {\em Brain Research}, 34(1):171--175, 1971.

\bibitem{neuro-coding}
C.~Curto, V.~Itskov, K.~Morrison, Z.~Roth, and J.~L. Walker.
\newblock Combinatorial neural codes from a mathematical coding theory
  perspective.
\newblock {\em Neural Comput.}, 25(7):1891--1925, 2013.

\bibitem{XieHahnSeung}
X.~Xie, R.~H. Hahnloser, and H.S. Seung.
\newblock Selectively grouping neurons in recurrent networks of lateral
  inhibition.
\newblock {\em Neural Comput.}, 14:2627--2646, 2002.

\bibitem{Berger}
M.~Berger.
\newblock {\em Geometry I}.
\newblock Springer Science \& Business Media, 1987.

\bibitem{cauchy}
R.~Horn and C.~Johnson.
\newblock {\em Matrix analysis}.
\newblock Cambridge University Press, 1985.

\end{thebibliography}

\end{document}